\newcommand{\bvdots}{ \tikz[baseline, every node/.style={inner sep=0}]{ \node at (0,0){.}; \node at (0,-6pt){.}; \node at (0,6pt){.}; } }
\newcommand{\quot}[2]{{\raisebox{.2em}{$#1$}\left/\raisebox{-.2em}{$#2$}\right.}}
\newcommand{\interp}[1]{\left\llbracket#1\right\rrbracket}
\newcommand{\N}{\mathbb{N}_0}
\newcommand{\df}{\stackrel{def}{=}}
\definecolor{zx_grey}{RGB}{211,211,211}
\tikzstyle{gn}=[fill=green, draw=black, shape=circle, tikzit category=ZX, tikzit fill=green, tikzit draw=black, tikzit shape=circle, inner sep=0.1em]
\tikzstyle{rn}=[fill=red, draw=black, shape=circle, tikzit fill=red, tikzit draw=black, tikzit category=ZX, tikzit shape=circle, inner sep=0.1em]
\tikzstyle{divide}=[regular polygon, regular polygon sides=3, shape border rotate=90, draw=black, fill={zx_grey}, inner sep=1.5pt, tikzit category=scal, rounded corners=0.8mm]
\tikzstyle{black}=[fill=black, draw=black, shape=circle, tikzit fill=black, tikzit draw=black, tikzit shape=circle, tikzit category=IH, inner sep=2pt]
\tikzstyle{gather}=[fill={zx_grey}, draw=black, tikzit category=scal, rounded corners=0.8mm, regular polygon, regular polygon sides=3, shape border rotate=-90, inner sep=1.5pt]
\tikzstyle{ggen}=[fill=white, draw=black, shape=rectangle, rounded corners=2mm, line width=1pt, tikzit draw=red, tikzit category=scal]
\tikzstyle{white}=[fill=white, draw=black, shape=circle, inner sep=2pt, tikzit category=IH]
\tikzstyle{mbox}=[fill=white, draw=black, rounded rectangle, rounded rectangle west arc=none, tikzit category=scal, tikzit shape=rectangle]
\tikzstyle{A}=[fill=white, shape=circle, tikzit category=scal, inner sep=1pt]
\tikzstyle{ggreen}=[fill=green, draw=black, shape=circle, tikzit category=SZX, tikzit fill=green, tikzit draw=black, line width=1pt, inner sep=0.1em]
\tikzstyle{gred}=[fill=red, draw=black, shape=circle, rounded corners=2mm, tikzit category=SZX, inner sep=0.1em, tikzit fill=red, line width=1pt]
\tikzstyle{ghad}=[fill=yellow, draw=black, shape=rectangle, tikzit category=SZX, tikzit shape=rectangle, tikzit fill=yellow, inner sep=0.1em, line width=1pt]
\tikzstyle{boxm}=[fill=white, draw=black, rounded rectangle, tikzit category=scal, tikzit shape=rectangle, rounded rectangle east arc=none]
\tikzstyle{box}=[fill=white, draw=black, shape=rectangle]
\tikzstyle{had}=[fill=yellow, draw=black, shape=rectangle, tikzit category=ZX, tikzit fill=yellow, tikzit draw=black, inner sep=0.1em]
\tikzstyle{gwhite}=[fill=white, draw=black, shape=circle, tikzit fill=white, tikzit shape=circle, line width=1 pt, inner sep=2 pt, tikzit draw=red]
\tikzstyle{gblack}=[fill=black, draw=black, shape=circle, tikzit fill=black, tikzit shape=circle, line width=1 pt, inner sep=2 pt, tikzit draw=red]
\tikzstyle{antipode}=[fill=red, draw=black, shape=rectangle, tikzit fill=red, tikzit draw=black, tikzit shape=rectangle, inner sep=2pt]
\tikzstyle{diamond}=[fill=white, draw=black, shape=diamond, inner sep=2pt]
\tikzstyle{mongr}=[fill=green, draw=green, shape=circle, inner sep=2pt]
\tikzstyle{monbl}=[fill=blue, draw=blue, shape=circle, inner sep=2pt]
\tikzstyle{bg}=[inner sep=0.7mm, minimum width=0pt, minimum height=0pt, fill=green, draw=white, very thick, shape=circle]
\tikzstyle{br}=[inner sep=0.7mm, minimum width=0pt, minimum height=0pt, fill=red, draw=white, very thick, shape=circle]
\tikzstyle{rmat}=[draw, signal, fill={zx_grey}, signal to=east, signal from=west, inner sep=1pt, minimum height=6pt]
\tikzstyle{lmat}=[draw, signal, fill={zx_grey}, signal to=west, signal from=east, inner sep=1pt, minimum height=6pt]
\tikzstyle{umat}=[draw, signal, fill={zx_grey}, signal to=north, signal from=south, inner sep=1pt, minimum width=6pt]
\tikzstyle{dmat}=[draw, signal, fill={zx_grey}, signal to=south, signal from=north, inner sep=1pt, minimum width=6pt]
\tikzstyle{box}=[shape=rectangle, text height=1.5ex, text depth=0.25ex, yshift=0.5mm, fill=white, draw=black, minimum height=5mm, yshift=-0.5mm, minimum width=5mm, font={\small}]
\tikzstyle{Z dot}=[inner sep=0mm, minimum size=2mm, shape=circle, draw=black, fill={rgb,255: red,160; green,255; blue,160}]
\tikzstyle{gdot}=[minimum size=3mm, font={\scriptsize\boldmath}, shape=rectangle, rounded corners=1.3mm, inner sep=1mm, outer sep=-1.8mm, scale=0.8, tikzit shape=circle, draw=black, fill=green, tikzit draw=blue]
\tikzstyle{X dot}=[Z dot, shape=circle, draw=black, fill={rgb,255: red,220; green,0; blue,0}]
\tikzstyle{rdot}=[minimum size=3mm, font={\scriptsize\boldmath}, shape=rectangle, rounded corners=1.3mm, inner sep=1mm, outer sep=-1.8mm, scale=0.8, tikzit shape=circle, draw=black, fill=red, tikzit draw=blue]
\tikzstyle{grdot}=[minimum size=3mm, font={\scriptsize\boldmath}, shape=rectangle, rounded corners=1.3mm, inner sep=1mm, line width=1pt, outer sep=-1.5mm, scale=0.8, tikzit shape=circle, draw=black, fill=red, tikzit draw=blue]
\tikzstyle{ggdot}=[minimum size=3mm, font={\scriptsize\boldmath}, shape=rectangle, line width=1pt, rounded corners=1.3mm, inner sep=1mm, outer sep=-1.5mm, scale=0.8, tikzit shape=circle, draw=black, fill=green, tikzit draw=blue]
\tikzstyle{arrow}=[-->]
\tikzstyle{new style 0}=[fill=yellow, draw=black, shape=triangle]
\tikzstyle{arrow}=[->]
\tikzstyle{very thick}=[-, line width=1pt, tikzit draw=red]
\tikzstyle{pointille}=[dashed, -]
\tikzstyle{red}=[-, draw=red]
\tikzstyle{blue}=[-, draw=blue]
\tikzstyle{green}=[-, draw=green]
\tikzstyle{strike}=[-, tikzit draw={rgb,255: red,191; green,0; blue,64}, strike through]
\tikzstyle{strike'}=[-, tikzit draw=cyan, strike bend]
\tikzstyle{dashed arrow}=[->, tikzit draw=green, draw=black, dashed]
\title{Colored props for large scale graphical reasoning}
\author{Titouan Carette\\
	CNRS, LORIA, Inria Mocqua, Universit\'e de Lorraine, F 54000 Nancy, France\\
	titouan.carette@loria.fr \\[0.5cm]
	Simon Perdrix\\
	CNRS, LORIA, Inria Mocqua, Universit\'e de Lorraine, F 54000 Nancy, France\\
	simon.perdrix@loria.fr}
\newtheorem{definition}{definition}
\newtheorem{theorem}{theorem}
\newtheorem{lemma}{lemma}
\begin{document}

\maketitle

\begin{abstract}
The prop formalism allows representation of processes with string diagrams and has been successfully applied in various areas such as quantum computing, electric circuits and control flow graphs. However, these graphical approaches suffer from scalability problems when it comes to writing large diagrams. A proposal to tackle this issue has been investigated for ZX-calculus using colored props. This paper extends the approach to any prop, making it a general tool for graphical languages manipulation.
\end{abstract}



\vspace{0.5cm}
There is a long list of graphical representations of various processes: Petri nets \cite{bonchi2019diagrammatic}, control flow graphs \cite{bonchi2015full}, boolean circuits \cite{lafont2003towards}, quantum circuits \cite{nielsen2002quantum}, proof nets \cite{girard2017proof}, \emph{etc}. Most of them are unified into the prop formalism. A \textbf{prop} is a monoidal category with a free monoid of objects. Props were first introduced by Mac Lane in~\cite{maclane1965categorical}. They are now used as a foundation for graphical reasoning through string diagrams. Each diagram is seen as a morphism in a prop. Props admit presentations by generators and equations~\cite{baez2018props}.

Compared to usual symbolic manipulations, diagrams have the advantage of capturing some fondamental properties in a simple and intuitive way. In practice, most axioms are encoded into the topology making the notation lighter and easier to read. In the specific case of quantum computation,  another advantage is the sometime radical compression allowing to represent matrices scaling exponentially with the number of qubits, by diagrams with a polynomial number of nodes. However, very large diagrams are difficult to draw, read and manipulate. Uniform diagrammatical proofs often use a lot of dots to suggest a repeating pattern or arbitrary large structures. The recent applications of these graphical methods and the need for verified proofs which cannot tolerate ellipsis lead to the development of new notations, allowing graphical languages to scale up.

\textbf{Contributions.}
For every graphical language we provide a way to construct a \emph{colored} graphical language where each wire is indexed by a size (its \emph{color}), and can be interpreted as several wires side by side. 
We call this the \emph{scalable construction}. We show that the scalable construction enjoys a universal property which gives us a natural way to extend the semantics of any (monochromatic) prop into semantics for the scalable version.
We then describe how other props corresponding to substructures can be embedded into a scalable prop by introducing a \emph{box construction}. This can be viewed as an abstraction of specific large graphical structures.
Finally, we use previously known completeness results for graphical structures like monoids, bialgebras and Interacting Hopf Algebras, to provide a compressed way to deal with those structures in other graphical languages.

\textbf{Related works.}
Our approach has been initiated in the special case of the ZX-calculus in \cite{carette2019szx}, we extend this construction to any graphical language.
The scalability problems for graphical languages has also motivated the introduction of !-boxes in \cite{kissinger2016tensors}. Unlike our work, !-boxes do not provide a new graphical language but a way to represent families of diagrams and to reason uniformly with them. In practice, scalable props provide a compact representation of large diagrams while !-boxes give a graphical description of how to construct a family of diagrams. Divider and gatherer generators  similar to ours have been used in \cite{miatto2019graphical} to graphically define  various matrix products. We share with this work our definition of direct sum of matrix arrows. The definition of our scaled generators matches what has been called monoidal multiplexing in \cite{chantawibul2018monoidal}, another work on the scalability problem. In fact this is the natural way to define the action of a diagram on many wires.

\textbf{Structure of the paper.} 
The first section sets up the prop framework to describe colored graphical languages. It provides the necessary definitions and results needed for the different scalable constructions.
In the second section, 
the scalable constructions are described together with their main properties.
The last section provides examples of large scale graphical structures that can benefit from our construction. We also give examples of concrete models for those structures, mainly taken from the ZX-, ZH- and ZW-calculi, introduced respectively in \cite{coecke2011interacting}, \cite{backens2018zh} and \cite{hadzihasanovic2018zw}.

\section{Colored graphical languages}

This section lays out the mathematical foundation for colored graphical reasoning. Nothing here is new, we just restate for graphical languages well known results of categorical universal alegebra.

\subsection{Colored props}

We work in the setting of colored props.

\begin{definition}[Colored prop]
	A \textbf{colored prop} is a small symmetric strict monoidal category $\mathbf{P}$ together with a set of \textbf{colors} $C$, such that the set of objects of $\mathbf{P}$ is freely spanned by the elements of $C$.
\end{definition}

We use the term \textbf{morphism} to denote an arrow of a colored prop as a category. From now on, we will use the term prop for a colored prop. A prop is \textbf{$C$-colored} when the set of colors is $C$ and \textbf{monochromatic} when $C$ is a singleton.
The set of objects of a $C$-colored prop is $C^*$, the set of finite lists of elements of $C$. A list of $n$ colors is denoted $\langle c_i\rangle_{i=1}^{n}$.  We will write $0$ for the empty list and $\boxtimes$ for the concatenation. This coincides with the monoidal structure in a $C$-colored prop. We have $0\boxtimes \mathbf{a} = \mathbf{a}\boxtimes 0 = \mathbf{a}$ for all $\mathbf{a}\in C^*$. Given a color map $\phi:C\to C'$ we denote $\phi^*:C^*\to {C'}^*$ its extension to lists defined by $\phi^*(0)=0$, $\phi^*(\mathbf{a}\boxtimes \mathbf{b})=\phi^*(\mathbf{a})\boxtimes \phi^*(\mathbf{b})$ and $\phi^*(\langle c\rangle)=\langle\phi(c) \rangle$.

A \textbf{prop morphism} is a symmetric strict monoidal functor mapping colors to colors. Formally, a prop morphism is a color map $\phi:C\to C'$ and a symmetric strict monoidal functor $F:\textbf{P}\to \textbf{P'}$ such that $F(\mathbf{a})=\phi^*(\mathbf{a})$. \textbf{Prop}  is the category of props and prop morphisms. This category is complete and co-complete. See \cite{hackney2015category} for a study of the category of props.

We will mainly work in subcategories of $\textbf{Prop}$ where the set of colors is fixed. A \textbf{$C$-colored prop morphism} is a prop morphism between two $C$-colored props where $\phi=id_C$. For any set $C$ of colors,  \textbf{$C$-Prop} is the category of $C$-colored props and $C$-colored prop morphisms. \textbf{$C$-Prop} is a subcategory of \textbf{Prop} which is not full. By setting $C$ to be a singleton we recover the category of monochromatic props and monochromatic prop morphisms that is often simply called \textbf{Prop} in the literature.

We will use the term \textbf{functor} when using functors in $\textbf{SymCat}$ (the category of small symmetric monoidal category and symmetric monoidal functors), which in general are not prop morphisms.

We now mention some remarkable props. $\mathbf{1}_C$ is the codiscrete category over $C^*$, it is a terminal object in \textbf{$C$-Prop}. The terminal object for monochromatic props is denoted $\mathbf{N}$. $\mathbb{P}_C$ is the category of permutations over finite lists of colors in $C$. It is an initial object in \textbf{$C$-Prop}. The initial object for monochromatic props is denoted $\mathbb{P}$.

\subsection{String diagrams}

Colored props admit a nice graphical representation with colored string diagrams. Let $\textbf{P}$ be a $C$-colored prop. 
The idea is that each color $c\in C$ corresponds to a color of wire. In this sub-section, we will actually use various colors  ($C=\{\textcolor{red}{\bullet},\textcolor{blue}{\bullet},\textcolor{green}{\bullet}\}$) to represent the wires:  the identity map $c\to c$ is represented as a wire of color $c$, e.g.\tikzfig{redid}. Each morphism $\langle c_i\rangle_{i=1}^{n}\to \langle c'_j\rangle_{j=1}^{m}$ is represented as a diagram with $n$ input wires colored according to the $c_i$s and $m$ output wires colored according to the $c'_j$s. Here is an example for a map: $f:\langle\textcolor{red}{\bullet},\textcolor{blue}{\bullet}\rangle \to \langle\textcolor{blue}{\bullet},\textcolor{red}{\bullet}\rangle$: \tikzfig{fbox}.
The empty object $0$ and its identity map $id_0:0\to0$ correspond to the empty diagram\!\tikzfig{emptydiag}\!\!. Diagrams of type $0\to 0$ have no inputs nor outputs, we call them \textbf{scalars}.

In this representation the tensor product of two morphisms is represented by juxtaposition: $\tikzfig{fbox}\boxtimes\tikzfig{gbox}=\tikzfig{ftgbox}$. Thus the identity of the list $\langle\textcolor{red}{\bullet},\textcolor{blue}{\bullet}\rangle$, $id_{\langle\textcolor{red}{\bullet},\textcolor{blue}{\bullet}\rangle}=id_{\langle\textcolor{red}{\bullet}\rangle}\boxtimes id_{\langle\textcolor{blue}{\bullet}\rangle}$, is represented as: \tikzfig{idredblue}.

The composition of two morphisms is done by plugging the corresponding colored wires: $\tikzfig{fbox}\circ\tikzfig{gbox}=\tikzfig{fogbox}$.

The symmetry maps are represented by crossing the corresponding colored wires, for example $\sigma_{\langle\textcolor{red}{\bullet},\textcolor{blue}{\bullet}\rangle}: \langle\textcolor{red}{\bullet},\textcolor{blue}{\bullet}\rangle \to \langle\textcolor{blue}{\bullet},\textcolor{red}{\bullet}\rangle$ is represented as : \tikzfig{swapredblue}. Its inverse is $\sigma_{\langle\textcolor{blue}{\bullet},\textcolor{red}{\bullet}\rangle}: \langle\textcolor{blue}{\bullet},\textcolor{red}{\bullet}\rangle \to \langle\textcolor{red}{\bullet},\textcolor{blue}{\bullet}\rangle$, graphically: $\tikzfig{swinv0}=\tikzfig{swinv1}$.

In this representation, the axioms of symmetric monoidal categories correspond to topological properties. The naturality of the symmetry corresponds to equations of the form: $\tikzfig{nats0}=\tikzfig{nats1}$. If two diagrams with colored wires are isomorphic then the corresponding morphisms are equal according to the axioms of $C$-colored props. See \cite{selinger2010survey} for an overview of the different ways to draw categories.

\subsection{Graphical languages}

We describe props with their presentations by generators and equations. We call such an equational theory a graphical language. We start by defining colored signatures.

\begin{definition}[Signature]
	A colored \textbf{signature} $\Sigma$ is a set of colors $C$ together with a family of sets $\Sigma[\mathbf{a},\mathbf{b}]$ indexed by $\mathbf{a},\mathbf{b}\in C^*$. The elements of $\Sigma[\mathbf{a},\mathbf{b}]$ are called \textbf{generators} of type $\mathbf{a}\to \mathbf{b}$. We write $|\Sigma|$ for $\biguplus\limits_{\mathbf{a},\mathbf{b}} \Sigma[\mathbf{a},\mathbf{b}]$, the set of all generators.
\end{definition}

A signature is a way to present a collection of morphisms of a colored prop that will be used later as building blocks, hence the name generators.
We use the same terminology as props, in particular signature stands for colored signature and we say that a signature is monochromatic when the set of colors is a singleton. A way to present a signature is as a functor $C^*\times C^* \to \textbf{Set}$, where $C^*\times C^*$ is a discrete category. Thus, there is a category of $C$-colored signatures $C$-$\mathbf{Sig}$ which is just another name for the functor category $\textbf{Set}^{C^* \times C^*}$. The \textbf{$C$-colored signature morphisms} are natural transformations between signatures. In other words $\alpha:\Sigma\to \Sigma'$ is a familly of functions $\alpha_{\mathbf{a},\mathbf{b}}:\Sigma[\mathbf{a},\mathbf{b}]\to \Sigma'[\mathbf{a},\mathbf{b}]$.

The interest of this definition comes from the following result presented in \cite{baez2018props}.

\begin{theorem}\cite{baez2018props}
	Let $U_C:C\text{-}\textbf{Prop}\to C\text{-}\mathbf{Sig}$ be the forgetful functor sending a $C$-colored prop $\textbf{P}$ to the $C$-colored signature $U_C(\textbf{P})$ such that for each $a,b\in C^*$ $\Sigma[a,b]=\textbf{P}[a,b]$. $U$ has a left adjoint $F:C\text{-}\mathbf{Sig}\to C\text{-}\textbf{Prop}$ and $C\text{-}\textbf{Prop}$ is equivalent to the Eilenberg-Moore category of the monad $T\df U\circ F$. We denote $\eta: 1 \Rightarrow T$ the unit, $\epsilon: F\circ U \Rightarrow 1$ the counit  and $\mu: T^2 \Rightarrow T$ the multiplication.
\end{theorem}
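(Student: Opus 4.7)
The plan is to construct the left adjoint $F$ explicitly as a free construction on a signature, and then to invoke (crude) monadicity. For a given $C$-colored signature $\Sigma$, I would define $F(\Sigma)$ as the $C$-colored prop whose morphisms are equivalence classes of string diagrams built by finite composition and tensoring of the generators in $|\Sigma|$ together with the structural generators (identities and symmetries on each color), modulo the axioms of a symmetric strict monoidal category. Typing is checked by reading the source/target colors of each generator off its position in $\Sigma[\mathbf{a},\mathbf{b}]$. The universal property of this free construction is that a signature morphism $\alpha : \Sigma \to U_C(\mathbf{P})$ extends uniquely to a $C$-colored prop morphism $\hat\alpha : F(\Sigma) \to \mathbf{P}$ by sending each generator $g \in \Sigma[\mathbf{a},\mathbf{b}]$ to $\alpha_{\mathbf{a},\mathbf{b}}(g)$ and then extending by composition and tensor product, with the SMC axioms guaranteeing that the extension descends to equivalence classes.

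Next I would verify the adjunction $F \dashv U_C$. Naturality of the bijection $C\text{-}\textbf{Prop}(F(\Sigma),\mathbf{P}) \cong C\text{-}\mathbf{Sig}(\Sigma,U_C(\mathbf{P}))$ is routine once the unit $\eta_\Sigma : \Sigma \to U_C(F(\Sigma))$ sending each generator to its equivalence class, and the counit $\epsilon_\mathbf{P} : F(U_C(\mathbf{P})) \to \mathbf{P}$ evaluating a formal diagram in $\mathbf{P}$, have been written down. The triangle identities reduce to checking that composition and tensor in $\mathbf{P}$ are preserved, which they are by construction.

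For the Eilenberg--Moore equivalence, I would apply the crude (or Beck's) monadicity theorem to $U_C$. The functor $U_C$ is evidently conservative: a $C$-colored prop morphism that is a bijection on every homset $\textbf{P}[\mathbf{a},\mathbf{b}]$ is a fortiori an isomorphism of props, since the monoidal structure on objects is fixed (it is $C^*$ with $\boxtimes$). It remains to check that $U_C$ creates coequalizers of $U_C$-split parallel pairs; this is standard for algebraic theories, and the point is that given a parallel pair $F,G : \mathbf{P} \rightrightarrows \mathbf{Q}$ with a split coequalizer in $C\text{-}\mathbf{Sig}$, the split structure lets one transport the composition, tensor product, identities and symmetries of $\mathbf{Q}$ uniquely onto the coequalizer signature, producing a $C$-colored prop whose underlying signature is the coequalizer and which coequalizes $F,G$ in $C\text{-}\textbf{Prop}$.

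The main obstacle, and the only step requiring real care, is the monadicity check: one must verify that the induced operations on the coequalizer are well-defined (independent of chosen representatives under the splitting) and satisfy the SMC axioms. Once this is done, Beck's theorem yields the equivalence $C\text{-}\textbf{Prop} \simeq \textbf{EM}(T)$ with $T = U_C \circ F$, and the unit, counit and multiplication identified in the statement are precisely those of the adjunction and its induced monad.
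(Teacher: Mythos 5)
This theorem is imported from \cite{baez2018props} and the paper gives no proof of its own, so the only question is whether your sketch is sound; it is, and it is essentially the standard argument. You build the free prop $F(\Sigma)$ as formal composites and tensors of generators, identities and symmetries modulo the symmetric strict monoidal axioms, establish $F\dashv U_C$ via the evident unit and counit, and then invoke Beck: $U_C$ is conservative because a $C$-colored prop morphism is the identity on objects, so a homset-wise bijection is an isomorphism, and $U_C$ creates coequalizers of $U_C$-split pairs by transporting composition, tensor and symmetries along the splitting. The one alternative worth mentioning is the route taken in the cited source: observe that $C$-colored props are exactly the models of a $(C^*\times C^*)$-sorted equational theory (operations for $\circ$, $\boxtimes$, identities and symmetries; equations the SMC axioms), so that both the left adjoint and the Eilenberg--Moore equivalence follow at once from the classical theorem that many-sorted varieties are monadic over their underlying sorted sets. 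That packaging avoids re-verifying the split-coequalizer condition by hand, which is precisely the step you correctly flag as the only one needing care; your direct check is fine, just be aware that well-definedness of the transported operations is forced (they are defined through the section, with the split-coequalizer identities ensuring functoriality of the quotient map) rather than something that could fail.
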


The main interest of the theorem is to provide us with the free functor $F$ sending a $C$-colored signature $\Sigma$ to the free $C$-colored prop $F(\Sigma)$ spanned by this signature.
Intuitively, a morphism in the free prop $F(\Sigma)$ is a string diagram built from generators in $\Sigma$ linked by wires and swaps. From now on, by \textbf{diagram} we mean a morphism in $F(\Sigma)$. Such diagrams are identified up to the topological moves corresponding to the symmetric monoidal axioms. $F$ can be used to define equational theories for props.

\begin{definition}[equation]
	An \textbf{equation} of type $a\to b$ with respect to a signature $\Sigma$ is a pair $(f,g)$ where $f,g\in F(\Sigma)[a,b]$.
\end{definition}

A set of equations over a signature $\Sigma$ can be presented as a pair of signature morphisms $l,r: E\to UF(\Sigma)$. A graphical calculus is then formed by a signature and a set of equations over this signature. 

\begin{definition}[$C$-colored graphical language]
	A $C$-colored graphical language is a tuple $\mathcal{L}\df(\Sigma_\mathcal{L}, E_\mathcal{L}, l_\mathcal{L}, r_\mathcal{L})$ where $\Sigma_\mathcal{L}$ and $E_\mathcal{L}$ are $C$-colored signatures and $l_\mathcal{L}$ and $r_\mathcal{L}$ are $C$-colored signature morphisms of type $E_\mathcal{L}\to UF(\Sigma_\mathcal{L})$.
\end{definition}

To any graphical language $\mathcal{L}\df(\Sigma_\mathcal{L}, E_\mathcal{L}, l_\mathcal{L}, r_\mathcal{L})$ corresponds an \textbf{underlying} prop $\textbf{L}$, defined as the coequalizer: \tikzfig{coeqp}.

It is defined only up to prop isomorphism. Given $f,g\in F(\Sigma_\mathcal{L})[n,m]$ we write $\mathcal{L}\vdash f=g$ iff $\pi_\mathcal{L}(f)=\pi_\mathcal{L}(g)$ in $\textbf{L}$, this means that using the equations as local rewriting rules we can transform the diagram $f$ into $g$.

We say that two graphical languages $\mathcal{L}$ and $\mathcal{Y}$ are \textbf{equipotent} iff $\textbf{L}\simeq \textbf{Y}$.

We define a \textbf{translation} between two graphical calculi as a signature morphism $\theta: \Sigma_\mathcal{L}\to UF(\Sigma_\mathcal{Y})$ satisfying the \textbf{soundness condition}: $\pi_\mathcal{Y}\circ \left(\epsilon_{F\Sigma_\mathcal{Y}}\circ F\theta\right)\circ \left(\epsilon_{F\Sigma_\mathcal{L}}\circ Fr_\mathcal{L}\right)=\pi_\mathcal{Y}\circ \left(\epsilon_{F\Sigma_\mathcal{Y}}\circ F\theta\right)\circ \left(\epsilon_{F\Sigma_\mathcal{L}}\circ Fl_\mathcal{L}\right)$. This asserts that equivalent diagrams in $\mathcal{L}$ are sent to equivalent diagrams in $\mathcal{Y}$. The coequalizer property of $\textbf{Y}$ gives a unique prop morphism $Bold(\theta
)$ such that $Bold(\theta)\circ \pi_\mathcal{L}=\pi_\mathcal{Y}\circ \left(\epsilon_{F\Sigma_\mathcal{Y}}\circ F\theta\right)$. In fact, the soundness condition is equivalent to the existence of $Bold(\theta)$.

\begin{lemma}\label{glcat}
	There is a category $\textbf{GL}$ of graphical languages and translations and an essentially surjective full functor $Bold:\textbf{GL}\to\textbf{Prop}$.
\end{lemma}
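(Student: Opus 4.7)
The proof decomposes into three tasks: verify that $\textbf{GL}$ is a category, define $Bold$ and check functoriality, and finally establish essential surjectivity and fullness.

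For the category structure, since a translation $\theta: \Sigma_\mathcal{L} \to UF(\Sigma_\mathcal{Y})$ is by definition a Kleisli morphism for the monad $T \df UF$, I would take identities and composition from the Kleisli category of $T$: the identity on $\mathcal{L}$ is $\eta_{\Sigma_\mathcal{L}}$, and the composite of $\theta:\mathcal{L}\to\mathcal{Y}$ and $\psi:\mathcal{Y}\to\mathcal{Z}$ is $\mu_{\Sigma_\mathcal{Z}} \circ UF(\psi) \circ \theta$. The category axioms are inherited from those of the Kleisli category, and it remains only to check that soundness is closed under these operations. For the identity, the triangle identity $\epsilon_{F\Sigma}\circ F\eta_\Sigma = id_{F\Sigma}$ makes soundness immediate. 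For composition, writing $\hat\theta \df \epsilon_{F\Sigma_\mathcal{Y}}\circ F\theta$, a short computation using naturality of $\epsilon$ and the monad associativity law yields $\widehat{\psi\circ\theta} = \hat\psi\circ\hat\theta$; the soundness of the composite then follows by applying the prop morphism $Bold(\psi)$ (which exists by soundness of $\psi$) to the soundness equation for $\theta$.

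Next, $Bold$ is defined on objects by $Bold(\mathcal{L}) \df \textbf{L}$ and on morphisms via the coequalizer universal property, which the text already observes is equivalent to soundness. Functoriality then reduces to uniqueness in this universal property: both $Bold(\psi\circ\theta)$ and $Bold(\psi)\circ Bold(\theta)$ satisfy the same defining equation after pre-composition with $\pi_\mathcal{L}$, thanks to the identity $\widehat{\psi\circ\theta} = \hat\psi\circ\hat\theta$, and the same uniqueness argument handles $Bold(\eta_{\Sigma_\mathcal{L}}) = id_\textbf{L}$.

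For essential surjectivity, Theorem~1 identifies \textbf{$C$-Prop} with the Eilenberg-Moore category of $T$, so every prop $\textbf{P}$ is the coequalizer of its canonical bar resolution $FUFU\textbf{P} \rightrightarrows FU\textbf{P}$; transposing the two parallel maps across the $F\dashv U$ adjunction yields signature morphisms that package $\textbf{P}$ as the underlying prop of a graphical language $\mathcal{L}_\textbf{P}$ with $\Sigma_{\mathcal{L}_\textbf{P}} = U\textbf{P}$ and $E_{\mathcal{L}_\textbf{P}} = UFU\textbf{P}$. The main obstacle is fullness. Given a prop morphism $G: \textbf{L} \to \textbf{Y}$, the signature morphism $U\pi_\mathcal{Y}$ is componentwise surjective on hom-sets, so by the axiom of choice I pick a set-theoretic section $s$ and set $\theta \df s \circ UG \circ U\pi_\mathcal{L} \circ \eta_{\Sigma_\mathcal{L}}$. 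By construction $U\pi_\mathcal{Y}\circ\theta = UG\circ U\pi_\mathcal{L}\circ\eta_{\Sigma_\mathcal{L}}$, which transposes under the adjunction to $\pi_\mathcal{Y}\circ\hat\theta = G\circ\pi_\mathcal{L}$; since $\pi_\mathcal{L}$ coequalizes $\epsilon_{F\Sigma_\mathcal{L}}\circ Fl_\mathcal{L}$ and $\epsilon_{F\Sigma_\mathcal{L}}\circ Fr_\mathcal{L}$, this equality simultaneously delivers soundness of $\theta$ and, using that $\pi_\mathcal{L}$ is epi together with the defining equation of $Bold(\theta)$, the identity $Bold(\theta) = G$.
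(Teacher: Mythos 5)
Your proposal is correct and follows essentially the same route as the paper: the paper's composition $\mu_{\Sigma_\mathcal{K}}\circ T\gamma\circ\theta$ and identity $\eta_{\Sigma_\mathcal{Y}}$ are exactly the Kleisli structure you invoke (the paper just verifies the axioms by hand with string diagrams), and its essential surjectivity and fullness arguments use the same bar resolution $FUFU\textbf{P}\rightrightarrows FU\textbf{P}$ and the same section-of-$\pi_\mathcal{Y}$ construction $\theta_f = Us f\pi_\mathcal{L}\circ\eta_{\Sigma_\mathcal{L}}$. Your reading of $s$ as a componentwise set-theoretic section of the signature morphism $U\pi_\mathcal{Y}$ is the careful version of what the paper intends.
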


\noindent A proof is given in \ref{glcat:pr}. 

\vspace{0.2cm}

This functor allows us to define props and prop morphism using graphical languages and translations. Furthermore any prop and prop morphism admit such a description, but it is not unique.

$C$-\textbf{Prop} is a cocomplete category thus it has sums and coequalizers. These can be described with graphical languages:

\begin{definition}[Sum of graphical languages]
	The sum of two $C$-colored graphical languages $\mathcal{L}$ and $\mathcal{Y}$ is defined as $\mathcal{L}+\mathcal{Y}\df \left(\Sigma_\mathcal{L}+\Sigma_{\mathcal{Y}}, E_\mathcal{L}+E_{\mathcal{Y}},l_{\mathcal{L}+\mathcal{Y}}  ,r_{\mathcal{L}+\mathcal{Y}}\right)$, where $l_{\mathcal{L}+\mathcal{Y}}=[UF(\iota_{\Sigma_\mathcal{L}})\circ l_\mathcal{L},UF(\iota_{\Sigma_\mathcal{L}})\circ l_\mathcal{Y}]$ and $r_{\mathcal{L}+\mathcal{Y}}=[UF(\iota_{\Sigma_\mathcal{L}})\circ r_\mathcal{L},UF(\iota_{\Sigma_\mathcal{L}})\circ r_\mathcal{Y}]$.
\end{definition}

The sum of two graphical languages has the generators and equations of both languages.

\begin{lemma}\label{copr}
	$Bold(\mathcal{L}+\mathcal{Y})$ is a coproduct $\textbf{L}+\textbf{Y}$ in $C$-\textbf{Prop}.
\end{lemma}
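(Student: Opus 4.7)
The plan is to compute the coproduct $\textbf{L}+\textbf{Y}$ directly from the coequalizer descriptions of $\textbf{L}$ and $\textbf{Y}$, and then recognize the resulting coequalizer as exactly the one defining $Bold(\mathcal{L}+\mathcal{Y})$. Recall that by the definition of the underlying prop, $\textbf{L}$ is the coequalizer in $C\text{-}\textbf{Prop}$ of the pair $\epsilon_{F\Sigma_\mathcal{L}}\circ Fl_\mathcal{L}, \epsilon_{F\Sigma_\mathcal{L}}\circ Fr_\mathcal{L}\colon F(E_\mathcal{L})\rightrightarrows F(\Sigma_\mathcal{L})$, and analogously for $\textbf{Y}$.

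First I would use the fact that in any cocomplete category, coproducts commute with coequalizers: the coproduct of two coequalizers is the coequalizer of the coproducts of the parallel pairs. Hence $\textbf{L}+\textbf{Y}$ is the coequalizer of the two induced maps $F(E_\mathcal{L})+F(E_\mathcal{Y})\rightrightarrows F(\Sigma_\mathcal{L})+F(\Sigma_\mathcal{Y})$. Since $F$ is a left adjoint (theorem quoted from \cite{baez2018props}), it preserves coproducts, so we may rewrite this pair as $F(E_\mathcal{L}+E_\mathcal{Y})\rightrightarrows F(\Sigma_\mathcal{L}+\Sigma_\mathcal{Y})$.

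Next I would verify that these two maps coincide with $\epsilon_{F\Sigma_{\mathcal{L}+\mathcal{Y}}}\circ Fl_{\mathcal{L}+\mathcal{Y}}$ and $\epsilon_{F\Sigma_{\mathcal{L}+\mathcal{Y}}}\circ Fr_{\mathcal{L}+\mathcal{Y}}$, which are the pair whose coequalizer defines $\textbf{L}+\textbf{Y}$ as $Bold(\mathcal{L}+\mathcal{Y})$. By the universal property of the coproduct it suffices to check equality after precomposing with the injection from $F(E_\mathcal{L})$ and the injection from $F(E_\mathcal{Y})$. For the left-hand side, the injection from $F(E_\mathcal{L})$ sends us through $F(\Sigma_\mathcal{L})$ to $F(\Sigma_\mathcal{L}+\Sigma_\mathcal{Y})$ via $F(\iota_{\Sigma_\mathcal{L}})\circ \epsilon_{F\Sigma_\mathcal{L}}\circ Fl_\mathcal{L}$. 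For the right-hand side, by functoriality and by naturality of $\epsilon$ applied to $\iota_{\Sigma_\mathcal{L}}\colon\Sigma_\mathcal{L}\to \Sigma_\mathcal{L}+\Sigma_\mathcal{Y}$, we have
\[
\epsilon_{F\Sigma_{\mathcal{L}+\mathcal{Y}}}\circ F\bigl(UF(\iota_{\Sigma_\mathcal{L}})\circ l_\mathcal{L}\bigr)=\epsilon_{F\Sigma_{\mathcal{L}+\mathcal{Y}}}\circ FUF(\iota_{\Sigma_\mathcal{L}})\circ Fl_\mathcal{L}=F(\iota_{\Sigma_\mathcal{L}})\circ \epsilon_{F\Sigma_\mathcal{L}}\circ Fl_\mathcal{L},
\]
so the two composites agree; the argument for the injection from $F(E_\mathcal{Y})$ and for the maps built from $r_\mathcal{L},r_\mathcal{Y}$ is symmetric.

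Combining these observations, the coequalizer description of $\textbf{L}+\textbf{Y}$ coincides with the defining coequalizer of $Bold(\mathcal{L}+\mathcal{Y})$, establishing the claim up to the canonical prop isomorphism. The only subtle step is the naturality/functoriality calculation that realigns $l_{\mathcal{L}+\mathcal{Y}}$ with the coproduct of the pairs, but this is essentially formal; no real obstacle arises, since everything follows from the adjunction $F\dashv U$ and cocompleteness of $C\text{-}\textbf{Prop}$.
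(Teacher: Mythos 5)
Your proposal is correct and follows essentially the same route as the paper's proof: both rely on the fact that coproducts commute with coequalizers and then identify the resulting parallel pair with $\epsilon_{F\Sigma_{\mathcal{L}+\mathcal{Y}}}\circ Fl_{\mathcal{L}+\mathcal{Y}}$ and $\epsilon_{F\Sigma_{\mathcal{L}+\mathcal{Y}}}\circ Fr_{\mathcal{L}+\mathcal{Y}}$ via the naturality of the counit $\epsilon$ applied to $F\iota_{\Sigma_\mathcal{L}}$. Your explicit remarks that $F$ preserves coproducts as a left adjoint and that equality can be checked on the coproduct injections merely spell out what the paper's copairing computation does implicitly.
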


\noindent A proof is given in \ref{copr:pr}. 

\vspace{0.2cm}

Usually when we build new graphical languages we take the sum of other graphical languages and then add more equations.

\begin{definition}[Quotient of a graphical language by equations]
	Given a $C$-colored graphical languages $\mathcal{L}$ and a set of equations $\left(E,l,r\right)$ over $\Sigma_\mathcal{L}$, the quotient of $\mathcal{L}$ by $E$ is defined as $\quot{\mathcal{L}}{E}\df \left(\Sigma_\mathcal{L},E_\mathcal{L}+E,[l_\mathcal{L},l],[r_\mathcal{L},r] \right)$.
\end{definition}

\begin{lemma}\label{coeq}
	$Bold(\quot{\mathcal{L}}{E})$ is a coequalizer of $\pi_\mathcal{L}\circ \epsilon_{F\Sigma_\mathcal{L}}  \circ Fl$ and $\pi_\mathcal{L}\circ \epsilon_{F\Sigma_\mathcal{L}}  \circ Fr$ in $C$-\textbf{Prop}.
\end{lemma}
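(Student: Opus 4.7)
The plan is to unfold the definition of $Bold(\mathcal{L}/E)$ and show its defining coequalizer diagram factors into two successive coequalizers, the first producing $\textbf{L}$ and the second producing exactly the coequalizer asserted in the statement. By the construction of the underlying prop of a graphical language, $Bold(\mathcal{L}/E)$ is the coequalizer of the parallel pair
\[ \epsilon_{F\Sigma_\mathcal{L}}\circ F[l_\mathcal{L},l],\ \epsilon_{F\Sigma_\mathcal{L}}\circ F[r_\mathcal{L},r] : F(E_\mathcal{L}+E)\to F(\Sigma_\mathcal{L}). \]
So the task is to identify this coequalizer with an iterated one.

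First, I would use that $F$ is a left adjoint (Theorem~1) and therefore preserves the coproduct $E_\mathcal{L}+E$. With coprojections $\iota_1,\iota_2$ into $F(E_\mathcal{L})+F(E)\cong F(E_\mathcal{L}+E)$, naturality of copairing gives the identities
\[ \epsilon_{F\Sigma_\mathcal{L}}\circ F[l_\mathcal{L},l]=[\epsilon_{F\Sigma_\mathcal{L}}\circ Fl_\mathcal{L},\,\epsilon_{F\Sigma_\mathcal{L}}\circ Fl], \]
and analogously for $r$. Thus the two arrows to be coequalized are copairings whose first components are precisely the arrows whose coequalizer defines $\pi_\mathcal{L}:F(\Sigma_\mathcal{L})\to\textbf{L}$.

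Next, I would invoke the standard categorical fact that a coequalizer of a parallel pair of copairings can be computed stepwise: given $u_1,v_1:A_1\to B$ and $u_2,v_2:A_2\to B$, the coequalizer of $[u_1,u_2],[v_1,v_2]:A_1+A_2\to B$ is obtained by first forming the coequalizer $q_1:B\to C_1$ of $(u_1,v_1)$ and then the coequalizer $q_2:C_1\to C_2$ of $(q_1 u_2,q_1 v_2)$; the composite $q_2\circ q_1$ is the desired coequalizer. This is a direct diagram chase using the universal properties and holds in any cocomplete category, hence in $C$-$\textbf{Prop}$. Applying it here, the first step yields $\pi_\mathcal{L}:F(\Sigma_\mathcal{L})\to\textbf{L}$, and the second step is exactly the coequalizer of $\pi_\mathcal{L}\circ\epsilon_{F\Sigma_\mathcal{L}}\circ Fl$ and $\pi_\mathcal{L}\circ\epsilon_{F\Sigma_\mathcal{L}}\circ Fr$, which establishes the claim.

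The only real care needed is in justifying the iterated coequalizer lemma and in making sure the identification of $F(E_\mathcal{L}+E)$ with $F(E_\mathcal{L})+F(E)$ commutes with the $F$-images of the signature copairings; both are routine but I expect the bookkeeping to be the most error-prone aspect of the argument. Once these are in place, the result follows with no further computation.
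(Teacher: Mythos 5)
Your proposal is correct and follows essentially the same route as the paper: the paper also identifies $Bold(\quot{\mathcal{L}}{E})$ as the coequalizer of the copaired maps, splits it into the coequalizer defining $\textbf{L}$ followed by a second coequalizer out of $\textbf{L}$, and verifies the universal property by the same two-step factorization (using that $\pi_\mathcal{L}$ is epi). The only difference is presentational — you isolate the iterated-coequalizer diagram chase as a general lemma, whereas the paper carries it out inline.
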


\noindent A proof is given in \ref{coeq:pr}.

\vspace{0.2cm}

The semantics of a graphical language is given by interpretation functors.

\begin{definition}
	An \textbf{interpretation} of a graphical language $\mathcal{L}$ is a functor $\interp{\_}:\textbf{L}\to \textbf{C}$. A language is said \textbf{complete}, respectively \textbf{universal}, for the category $\textbf{C}$ if there exists an interpretation which is full, respectively faithful.
\end{definition}

A natural question, given a graphical language, is to find an interpretaion for which the language is universal and complete. Such examples will be given in section 3. We are now ready to introduce the scalable construction.

\section{The scalable construction}

From now on, our set of colors will always be $\N:=\mathbb N \setminus \{0\}$. In this setting we will say \textbf{size} instead of color. We will use string diagrams to represent morphisms. A wire of size $1$ is said \textbf{simple} and a wire of size $n>1$ is said \textbf{big}. To simplify the notations, a thin wire will always be simple when a bold one without label can be of any size. We will only use size labels when absolutely necessary.

The objects of an $\N$-colored prop are lists of positive integers. The  list with $k$ times the integer $n$: $\langle n\rangle_{i=1}^k$ is denoted $k\cdot n$. We have $0\cdot n = k \cdot 0 = 0$ but $k\cdot n \neq n \cdot k$ and in particular $k\cdot 1 \neq 1 \cdot k$. In fact $k\cdot 1$ corresponds to $k$ simple wires and $1\cdot k$ to one big wire of size $k$. Given a family of wires of arbitrary size, which corresponds to an arbitrary object in an $\N$-colored prop, we define a notion of global size.

\begin{definition}
	Given an $\N$-colored prop $\textbf{P}$, the \textbf{global size} functor $[\_]:\textbf{P}\to \mathbf{N}$ is the unique functor satisfying: $[0]=0$ and $[1\cdot n]=n$.
\end{definition}

Intuitively big wires of size $k$ are ribbon cables, representing $k$ simple wires together, the global size functor just counts the overall number of simple wires. This intuition is made precise by the wire calculus.

\subsection{The wire calculus}

\begin{definition}[$\mathcal{D}$ and $\mathcal{G}$]
	The graphical languages $\mathcal{D}$ and $\mathcal{G}$ (for dividers and gatherers) are respectively freely generated by the signatures $\Sigma_\mathcal{D}[1\cdot (n+1),(1\cdot 1) \boxtimes (1\cdot n)]\stackrel{def}{=}\{\delta_n\}$ and $\Sigma_\mathcal{G}[(1\cdot 1) \boxtimes (1\cdot n),1\cdot (n+1)]\stackrel{def}{=}\{\gamma_n\}$ for all $n\in\N$. The generator $\delta_n$ is called the \textbf{divider} of size $n$ and is depicted:  \tikzfig{deltan}. The generator $\gamma_n$ is called the \textbf{gatherer} of size $n$ and is depicted:  \tikzfig{gamman}. We take the conventions: $\delta_0 \stackrel{def}{=} id_1$ and $\gamma_0 \stackrel{def}{=} id_1$.
\end{definition}

We have $\textbf{G}\simeq\textbf{D}^{op}$. We then define how those props interact in the spirit of \cite{lack2004composing}. The \textbf{expansion equation} of size $n$, $exp_n$ is the equation $\gamma_n \circ \delta_n = id_{1\cdot (n+1)}$, pictorially: $\tikzfig{exp0}\quad\stackrel{exp\label{exp}}{=}\quad\tikzfig{exp1}$. The set of all expansion equations for each $n\in \N$ is denoted $Exp$.
The \textbf{elimination equation} of size $n$, $elim_n$ is the equation $\delta_{n} \circ \gamma_{n} = id_{1\cdot 1} \boxtimes id_{1\cdot n}$, pictorially: $\tikzfig{elim0}\quad\stackrel{exp\label{elim}}{=}\quad\tikzfig{elim1}$. The set of all elimination equations for each $n\in \N$ is denoted $Elim$.

The convention for $\delta_0$ and $\gamma_0$ makes $exp_0$ and $elim_0$ trivially true.

\begin{definition}[The wire calculus $\mathcal{W}$]
  The $\N$-colored graphical language $\mathcal{W}$ is defined as $\mathcal{W}\df\quot{\left(\mathcal{D}+\mathcal{G}\right)}{\left(Elim, Exp\right)}$.
\end{definition}

In $\mathcal{W}$, the role of dividers and gatherers is perfectly symmetric, thus we have $\textbf{W}\simeq\textbf{W}^{op}$. $\textbf{W}$ is a groupoid, in fact the \hyperref[elim]{elimination} and  \hyperref[exp]{expansion} rules exactly state that the generators are invertible. We also note that the generators preserve the global size so there is no morphism of type $\mathbf{a}\to \mathbf{b}$ when $[\mathbf{a}]\neq [\mathbf{b}]$. In fact we can go further: when $[\mathbf{a}]= [\mathbf{b}]$, the morphisms of type $\mathbf{a}\to \mathbf{b}$ are the permutations of $[\mathbf{a}]$. 

\begin{theorem}[Rewiring theorem]\label{thm:rw}
	$\textbf{W}$ is a full subcategory of the permutation category $\mathbb{P}_{\N}$, satisfying $[\mathbf{a}]\neq [\mathbf{b}]\Rightarrow \textbf{W}[\mathbf{a},\mathbf{b}]=\emptyset$.
\end{theorem}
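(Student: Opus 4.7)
The plan is to establish the two parts of the claim separately: the vanishing of $\textbf{W}[\mathbf{a},\mathbf{b}]$ for mismatched global sizes, and the identification of the remaining hom-sets with symmetric groups via a fully faithful prop morphism $J:\textbf{W}\to\mathbb{P}_{\N}$. The emptiness condition I will dispatch directly, by noting that every generator of $\textbf{W}$ (the $\delta_n$, $\gamma_n$, and the swaps $\sigma_{m,n}$) has source and target of equal global size, and that composition and tensor preserve this equality; hence no morphism exists between $\mathbf{a}$ and $\mathbf{b}$ with $[\mathbf{a}]\neq[\mathbf{b}]$.

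For the second part, I will define $J$ on generators by sending both $\delta_n$ and $\gamma_n$ to the identity permutation on $n+1$ simple wires (between the objects $\langle n+1\rangle$ and $\langle 1,n\rangle$, viewed in $\mathbb{P}_{\N}$ via the global-size correspondence) and each symmetry $\sigma_{m,n}$ to the corresponding block transposition in $S_{m+n}$. Both sides of $Exp$ and $Elim$ map to the identity permutation, so by Lemma \ref{coeq} the assignment extends uniquely to a prop morphism $J$. Fullness will then be shown by explicit realization: given $\pi\in S_k$ with $k=[\mathbf{a}]=[\mathbf{b}]$, the composite $G_\mathbf{b}\circ\tilde\pi\circ D_\mathbf{a}$ realizes it in $\textbf{W}$, where $D_\mathbf{a}:\mathbf{a}\to k\cdot 1$ is a cascade of dividers fully decomposing $\mathbf{a}$, $\tilde\pi$ is built from elementary simple-wire swaps, and $G_\mathbf{b}:k\cdot 1\to\mathbf{b}$ is a cascade of gatherers.

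The main obstacle is faithfulness. The key observation is that $Exp$ and $Elim$ force every generator of $\textbf{W}$ to be invertible, making $\textbf{W}$ a groupoid in which $D_\mathbf{a}$ and $G_\mathbf{a}$ are mutually inverse isomorphisms. Precomposing with $G_\mathbf{a}$ and postcomposing with $D_\mathbf{b}$ reduces faithfulness to showing that $J:\textbf{W}[k\cdot 1,k\cdot 1]\to S_k$ is injective. For this, the plan is to prove a normal-form theorem: any diagram in $\textbf{W}$ with simple-wire endpoints can be rewritten, using $Exp$, $Elim$, and the symmetric monoidal axioms, into one containing no big wires at all. The rewriting proceeds by pushing generators around via naturality of the symmetry until any internally-created big wire becomes adjacent to a matching consumer, so that the resulting divider--gatherer pair can be cancelled by $Elim$, strictly decreasing the generator count. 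Iterating produces the desired normal form, after which the morphism is a pure permutation on $k$ simple wires and $J$ restricts to the obvious bijection, yielding the claim.
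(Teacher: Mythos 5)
Your overall architecture is sound and genuinely different from the paper's: instead of proving directly that every $\omega\in\textbf{W}[\mathbf{a},\mathbf{b}]$ normalizes to $\Gamma_\mathbf{b}\circ\sigma\circ\Delta_\mathbf{a}$, you build a prop morphism $J$ into the permutation prop, get fullness by explicit realization, and use the groupoid structure to reduce faithfulness to endomorphisms of $k\cdot 1$. That reduction via the mutually inverse isomorphisms $\Delta_\mathbf{a}$ and $\Gamma_\mathbf{a}$ is a genuine simplification, since the paper has to track arbitrary boundary types throughout its induction; the emptiness claim, the well-definedness of $J$, and fullness are all routine, as you say.

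The gap is in the one step that carries all the combinatorial content: the claim that every internally created big wire bounds a divider--gatherer pair that $Elim$ cancels, strictly decreasing the generator count. $Elim$ only rewrites $\delta_n\circ\gamma_n$, i.e.\ a big wire running from the \emph{output} of a gatherer into the \emph{input} of a divider. A big wire can equally well run from the big output of some $\delta_n$ into the big input of some $\gamma_n$ while the two simple legs are attached elsewhere in the diagram; there $Elim$ does not apply, $Exp$ does not literally apply either (it needs both the simple and the big leg of the divider to feed the same gatherer), and naturality of the symmetry cannot reverse the orientation of such a pair. Handling that configuration requires reading $Exp$ right-to-left, inserting a smaller gatherer--divider pair in the middle of the wire, which \emph{increases} the generator count, so your termination measure breaks. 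The repair is to always cancel a big wire of \emph{maximal} size: since your boundary wires are all simple, a maximal big wire cannot be the big output of a divider nor the big input of a gatherer (the adjacent wire would be strictly larger), so it must be produced by a gatherer and consumed by a divider, $Elim$ applies, and the multiset of big-wire sizes strictly decreases. This is essentially the induction the paper runs (on the maximal size of a wire in a ``bad situation'', with $Elim$ handling gatherer-to-divider wires and $Exp$ handling divider-to-gatherer ones); without the maximality observation or an explicit two-case analysis, your rewriting step is not justified as stated.
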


\noindent A proof is given in \ref{pr:rw}.

\vspace{0.2cm}

This gives us a clear understanding of what $\textbf{W}$ looks like as a category. The rewiring theorem works as a coherence result in the sense that any equation is true up to permutations as soon as the types match. This gives us total freedom to rewire the way we want.

The way the dividers and gatherers are defined, taking the wires one by one, is useful to come up with a normal form but is still quite restrictive. The rewiring theorem allows us to unambiguously generalize dividers and gatherers to any wire size. Furthermore we can define a divider with an arbitrary number of outputs since the associativity equation holds: $\tikzfig{asdiv0}~=~\tikzfig{asdiv1}$. We define inductively $\Delta_n:1\cdot n\to n\cdot 1$ by $\Delta_{0}=id_0$, and $\Delta_{n+1}=\left(id_1 \otimes \Delta_{n}\right)\circ \delta_{n}$. The diagram $\Delta_n$ is a sequence of $n$ dividers of decreasing size. For any object $\mathbf{a}=\langle n_i\rangle_{i=1}^k$ we define $\Delta_\mathbf{a}\df\bigotimes_{i} \Delta_{n_i}$. 
Dually, we define $\Gamma_{m}:m\cdot 1\to 1\cdot m$ by $\Gamma_{0}=id_0$, and $\Gamma_{m+1}=\gamma_{m}\circ \left(id_1 \otimes \Gamma_{m}\right)$, which is a sequence of $m$ gatherers of increasing size. For any object $\mathbf{b}=\langle m_i\rangle_{i=1}^k$ we define $\Gamma_\mathbf{b}\df\bigotimes_{i} \Gamma_{m_i}$.

We now proceed to making a monochromatic prop interacting with dividers and gatherers through the scalable construction.

\subsection{The scalable construction}

Intuitively, the scalable construction is the free embedding of a monochromatic graphical language into the simple wires of $\textbf{W}$. 

\begin{definition}[$\mathcal{SL}$]
	Given a monochromatic graphical language $\mathcal{L}$, we define an $\N$-colored graphical language $\mathcal{L}^{\N}$ by:

	\noindent $\Sigma_{\mathcal{L}^{\N}}[a,b]\df \begin{cases} \Sigma_{\mathcal{L}}[n,m] & \!\!\!\text{if $(a,b){=}(n{\cdot}1, m{\cdot}1)$}\\\emptyset & \!\!\!\text{otherwise}\end{cases} \quad E_{\mathcal{L}^{\N}}[a,b]\df \begin{cases} E_{\mathcal{L}}[n,m] & \!\!\!\text{if $(a,b){=}(n{\cdot}1, m{\cdot}1)$}\\\emptyset & \!\!\!\text{otherwise}\end{cases}$
	
	
	The \textbf{scalable} graphical calculus $\mathcal{SL}$ is defined as: $\mathcal{SL}\df \mathcal{L}^{\N}+\mathcal{W}$. The underlying prop $Bold(\mathcal{SL})$ is denoted $\mathcal{S}\textbf{L}$.
\end{definition}

The scalable construction was first introduced in \cite{chancellor2016graphical} to allow a compact representation of large diagrams with an identifiable large scale structure. 
In fact, starting from $\mathcal{SL}$ we can add syntactic sugar to handle large scale graphical rewriting.

Given a diagram $g\in \mathcal{L}[n,m]$, its scaled version of size $k$ is a diagram $g_k \in \mathcal{SL}[k \cdot n, k \cdot m]$ inductively defined by: $g_1\df g$ and $\tikzfig{gen0}\df\tikzfig{gen1}$.
 This transformation has been called monoidal multiplexing in \cite{chantawibul2018monoidal}. Notice that these structures require a way to cross wires, here we have a symmetry but a braiding would also work.

\begin{lemma}\label{funk}
	For every $k$, there is a functor $S_k:\textbf{L}\to \mathcal{S}\textbf{L}$, such that $S_k(1)=1\cdot k$ and $S_k(\pi_\mathcal{L}(g))=\pi_\mathcal{SL}(g_k)$.
\end{lemma}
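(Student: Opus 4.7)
The plan is to build $S_k$ in two stages: first extend the scaling $g \mapsto g_k$ to a symmetric strict monoidal functor on the free prop $F(\Sigma_\mathcal{L})$, then descend through the coequalizer that defines $\mathbf{L}$.

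For the first stage, consider the signature morphism $\sigma_k: \Sigma_\mathcal{L} \to U(\mathcal{S}\mathbf{L})$ (accompanied by the color map $1 \mapsto k$) sending each $g \in \Sigma_\mathcal{L}[n,m]$ to $\pi_{\mathcal{SL}}(g_k)$. The free--forgetful adjunction of the first theorem then yields a unique symmetric strict monoidal functor $\tilde{S}_k: F(\Sigma_\mathcal{L}) \to \mathcal{S}\mathbf{L}$ with $\tilde{S}_k(n) = n \cdot k$ on objects and $\tilde{S}_k(g) = \pi_{\mathcal{SL}}(g_k)$ on generators. Extending $(-)_k$ to composite diagrams by $(f \circ g)_k \df f_k \circ g_k$ and $(f \boxtimes g)_k \df f_k \boxtimes g_k$, a short induction on diagram structure gives $\tilde{S}_k(f) = \pi_{\mathcal{SL}}(f_k)$ for every $f \in F(\Sigma_\mathcal{L})$.

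The main obstacle is the second stage: verifying that $\tilde{S}_k$ equates the two sides of each equation of $\mathcal{L}$, i.e., $\pi_{\mathcal{SL}}(l_\mathcal{L}(e)_k) = \pi_{\mathcal{SL}}(r_\mathcal{L}(e)_k)$ in $\mathcal{S}\mathbf{L}$ for every $e \in E_\mathcal{L}$. Since $\mathcal{SL} = \mathcal{L}^{\N} + \mathcal{W}$, its equational theory includes, by the very definition of $\mathcal{L}^{\N}$, a copy of every equation of $E_\mathcal{L}$ situated at the all-simple-wires subcategory of $\mathcal{S}\mathbf{L}$. A further induction on $f$ shows that $f_k$ equals in $\mathcal{S}\mathbf{L}$ a diagram of the shape $\Gamma_{m \cdot k} \circ \hat{f}^{\boxtimes k} \circ \Delta_{n \cdot k}$ (up to a rewiring of the $nk$ internal simple wires permitted by the rewiring theorem), where $\hat{f}$ denotes the image of $f$ under the embedding of $\Sigma_\mathcal{L}$ into $\Sigma_{\mathcal{L}^{\N}}$ at size~$1$; the $Elim$ and $Exp$ equations of $\mathcal{W}$ are exactly what is needed to cancel the internal divider/gatherer pairs created by the recursive construction. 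Applying this normal form to both sides $l$ and $r$ and using the size-$1$ copy of the equation $\hat{l} = \hat{r}$ provided by $E_{\mathcal{L}^{\N}}$ then yields $l_k = r_k$.

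Once this soundness condition is verified, the coequalizer presentation of $\mathbf{L}$ produces a unique symmetric strict monoidal functor $S_k: \mathbf{L} \to \mathcal{S}\mathbf{L}$ with $S_k \circ \pi_\mathcal{L} = \tilde{S}_k$, and the required identities $S_k(1) = 1 \cdot k$ and $S_k(\pi_\mathcal{L}(g)) = \pi_{\mathcal{SL}}(g_k)$ hold by construction.
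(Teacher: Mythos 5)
Your proposal is correct in outline but takes a more systematic route than the paper. The paper's proof is a terse direct verification: it checks that the assignment $g\mapsto g_k$ respects identities (the expansion equation collapses the ladder of dividers and gatherers produced by scaling $id_n$) and respects composition (the elimination equation cancels the internal gatherer--divider pairs in $(g\circ h)_k$ against $g_k\circ h_k$), and leaves the descent to the quotient $\textbf{L}$ implicit. You instead build $\tilde S_k$ on the free prop via the free--forgetful adjunction and then descend through the coequalizer after checking soundness; this is cleaner scaffolding and, notably, you make explicit the step the paper glosses over, namely that $\mathcal{L}\vdash l=r$ implies $\mathcal{SL}\vdash l_k=r_k$ (your normal form $\Gamma_{m\cdot k}\circ \hat f^{\boxtimes k}\circ\Delta_{n\cdot k}$ together with the size-$1$ copy of each equation in $E_{\mathcal{L}^{\N}}$ does this correctly). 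The one point you must state more carefully: you \emph{declare} $(f\circ g)_k\df f_k\circ g_k$ and $(f\boxtimes g)_k\df f_k\boxtimes g_k$, but $(-)_k$ is already defined on all diagrams by the recursion on $k$, so these identities are obligations, not definitions --- and proving them (in $\mathcal{S}\textbf{L}$, using $Elim$ for composition and $Exp$ for identities) is precisely the content of the paper's proof. As written, your adjunction argument only yields $S_k(\pi_\mathcal{L}(g))=\pi_\mathcal{SL}(g_k)$ for your redefined $g_k$; to get the lemma as stated you need these compatibility equations for the recursive $g_k$, which your later normal-form argument would in fact deliver if carried out.
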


\noindent A proof is given in \ref{pr:funk}. 

\vspace{0.2cm}

These functors ensure that any equation between diagrams still holds at large scale where one application of the scaled rule is in fact hiding $k$ parallel applications of the original rule.

We can go further, if $(g(\alpha):\mathbf{a}\to \mathbf{b})_{\alpha\in A}$ is a family of diagrams indexed by some parameter $\alpha\in A$ then we can index the scaled version by an element $\boldsymbol{\alpha}\in A^k$. This is defined inductively by $g_1(\langle\alpha\rangle)\df g(\alpha)$ and: $\tikzfig{genp0}\df\tikzfig{genp1}$, with $\boldsymbol{\alpha}=\langle \alpha \rangle \boxtimes \boldsymbol{\beta}$ and $\boldsymbol{\beta}\in A^k$.

Notice that  this is how scale spiders are defined in the scalable ZX-calculus \cite{carette2019szx}. We see here that  this general construction applies without difficulties to the other kinds of indexed spiders one can find in ZW or ZH calculi, and to the boxes indexed by numbers in \cite{bonchi2017interacting}. The associated scaled rule, if any, should be \emph{a priori} defined inthe same way. For example, in the ZX-calculus, the phases $\alpha$ and $\beta$
 of two spiders add up when they fuse, so the lists of phases $\boldsymbol{\alpha}$ and $\boldsymbol{\beta}$ add up pointwisely when scaled spiders fuse.

This construction applies naturally to generators but can be applied to any diagrams that we want to see acting at a large scale.

We can refine the global size functor into a functor from $\mathcal{S}\textbf{L}$ to $\textbf{L}$ which forgets dividers and gatherers.

\begin{definition}[Wire stripper]
	Given a monochromatic graphical language $\mathcal{L}$, the \textbf{wire stripper} functor $\{\_\}: \mathcal{S}\textbf{L}\to \textbf{L}$ is defined on colors by $\{\mathbf{a}\}=[\mathbf{a}]$ and on morphisms by $\{\pi_\mathcal{SL}(\delta_k)\}=id_k$, $\{\pi_\mathcal{SL}(\gamma_k)\}=id_k$ and $\{\pi_\mathcal{SL} (x)\}=\pi_\mathcal{L}(x)$ for all generators $x\in \Sigma_{\mathcal{L}}$.
\end{definition}

\begin{lemma}\label{wstrip}
	For every morphism $\omega\in \textbf{L}[n,m]$, we have $\{S_1(\omega)\}=\omega$.
\end{lemma}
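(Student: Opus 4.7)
The plan is to show that the composite functor $\{\_\}\circ S_1:\mathbf{L}\to\mathbf{L}$ coincides with $id_\mathbf{L}$. Since $\mathbf{L}$ is presented as a coequalizer over the signature $\Sigma_\mathcal{L}$ and equations $E_\mathcal{L}$, any symmetric strict monoidal functor out of $\mathbf{L}$ is uniquely determined by its action on the generating color and on the generators in $\Sigma_\mathcal{L}$. So it suffices to verify agreement on this data.

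On objects, $S_1(1)=1\cdot 1$ by definition of the scaled generators at base case, and $\{1\cdot 1\}=[1\cdot 1]=1$ by the definition of the wire stripper together with the global size functor. Strict monoidality then extends this to $\{S_1(n)\}=n$ for every object $n\in\mathbb{N}$.

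On a generator $g\in\Sigma_\mathcal{L}[n,m]$, the base case $g_1\df g$ of the inductive definition of scaled generators combined with Lemma~\ref{funk} gives $S_1(\pi_\mathcal{L}(g))=\pi_\mathcal{SL}(g_1)=\pi_\mathcal{SL}(g)$. Applying the wire stripper then yields $\{\pi_\mathcal{SL}(g)\}=\pi_\mathcal{L}(g)$ directly from its defining clause on generators coming from $\mathcal{L}^{\N}$. Hence $\{S_1(\pi_\mathcal{L}(g))\}=\pi_\mathcal{L}(g)$ for every generator of $\mathbf{L}$. Identities and symmetries are preserved automatically by any symmetric strict monoidal functor, so both $\{\_\}\circ S_1$ and $id_\mathbf{L}$ agree on all of the generating data of $\mathbf{L}$.

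The main obstacle, which is really a bookkeeping point rather than a substantive difficulty, is ensuring that $\{\_\}\circ S_1$ is well-defined as a symmetric strict monoidal functor out of $\mathbf{L}$, which follows immediately from composing the functors produced by Lemma~\ref{funk} and by the definition of the wire stripper. Once this is in place, the universal property of $\mathbf{L}$ as a coequalizer (Lemma~\ref{coeq}) promotes pointwise agreement on generators to global equality of functors, which applied to an arbitrary $\omega\in\mathbf{L}[n,m]$ gives exactly $\{S_1(\omega)\}=\omega$.
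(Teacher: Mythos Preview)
Your proposal is correct and follows essentially the same approach as the paper: both arguments verify that the composite $\{\_\}\circ S_1$ agrees with the identity on generators, using $S_1(\pi_\mathcal{L}(g))=\pi_\mathcal{SL}(g)$ and the defining clause $\{\pi_\mathcal{SL}(g)\}=\pi_\mathcal{L}(g)$. The paper's proof is a single line doing exactly this check; your version simply spells out more carefully the object-level computation and the appeal to the coequalizer universal property that justifies why agreement on generators suffices.
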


\noindent A proof is given in \ref{pr:wstrip}. 

\vspace{0.2cm}

All the properties of $\mathcal{SL}$ follow from a structure theorem which can be seen as an extension to $\mathcal{SL}$ of the rewiring theorem.  

\begin{theorem}[Structure of $\mathcal{S}\textbf{L}$]\label{struct}
	For every $\omega\in \mathcal{SL}[\mathbf{a},\mathbf{b}]$ we have $\mathcal{SL}\vdash \omega=\Gamma_\mathbf{b} \circ S_1(\{\omega\})\circ \Delta_\mathbf{a}$.
\end{theorem}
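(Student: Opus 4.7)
I would proceed by structural induction on a representative diagram $\omega\in F(\Sigma_{\mathcal{SL}})$, using $\mathcal{SL}=\mathcal{L}^{\N}+\mathcal{W}$ so that the generators split into those of $\mathcal{L}$ (placed on simple wires) together with the dividers $\delta_n$ and gatherers $\gamma_n$. The inductive invariant is exactly the identity in the statement. Before starting the induction I would record the two auxiliary wiring identities $\Gamma_\mathbf{c} \circ \Delta_\mathbf{c} = id_\mathbf{c}$ and $\Delta_\mathbf{c} \circ \Gamma_\mathbf{c} = id_{[\mathbf{c}] \cdot 1}$ in $\mathbf{W}$, valid for every object $\mathbf{c}$. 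Both sides lie in $\mathbf{W}$ with matching domain and codomain, so by Theorem \ref{thm:rw} it is enough to check that their underlying permutations are the identity; this is immediate by induction on $\mathbf{c}$ from the expansion and elimination equations.

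\textbf{Base cases.} For $\omega=id_\mathbf{a}$ the claim reduces to the first auxiliary identity. For a symmetry $\omega=\sigma_{\mathbf{a},\mathbf{b}}$, both $\sigma$ and $\Gamma_{\mathbf{b}\boxtimes\mathbf{a}}\circ S_1(\{\sigma\})\circ\Delta_{\mathbf{a}\boxtimes\mathbf{b}}$ are morphisms of $\mathbf{W}$ realising the same underlying permutation, so Theorem \ref{thm:rw} again closes the case. For a generator $g\in\Sigma_\mathcal{L}[n,m]$, viewed in $\mathcal{SL}$ at type $n\cdot 1\to m\cdot 1$, the convention $\Delta_1=\Gamma_1=id_1$ gives $\Delta_{n\cdot 1}=id_{n\cdot 1}$ and $\Gamma_{m\cdot 1}=id_{m\cdot 1}$, while $S_1(\{g\})=S_1(\pi_\mathcal{L}(g))=\pi_\mathcal{SL}(g_1)=g$ by Lemma \ref{funk}. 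For $\omega=\delta_n$, one has $\{\delta_n\}=id_{n+1}$, and the right-hand side unfolds to $(id_1\otimes\Gamma_n)\circ(id_1\otimes\Delta_n)\circ\delta_n$, which collapses to $\delta_n$ via $\Gamma_n\circ\Delta_n=id_{1\cdot n}$; the gatherer case is dual.

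\textbf{Inductive steps and main obstacle.} For composition $\omega=\omega_2\circ\omega_1$ through a common object $\mathbf{c}$, chaining the two induction hypotheses yields $\Gamma_\mathbf{b}\circ S_1(\{\omega_2\})\circ(\Delta_\mathbf{c}\circ\Gamma_\mathbf{c})\circ S_1(\{\omega_1\})\circ\Delta_\mathbf{a}$; the second auxiliary identity collapses the bracketed factor, after which functoriality of $S_1$ (Lemma \ref{funk}) and of the wire stripper give the desired form. For tensor $\omega=\omega_1\otimes\omega_2$, the decompositions $\Delta_{\mathbf{a}_1\boxtimes\mathbf{a}_2}=\Delta_{\mathbf{a}_1}\otimes\Delta_{\mathbf{a}_2}$ and its $\Gamma$ analogue, combined with monoidality of $S_1$ and $\{\_\}$, make the step mechanical. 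The main obstacle is this composition case, specifically establishing $\Delta_\mathbf{c}\circ\Gamma_\mathbf{c}=id_{[\mathbf{c}]\cdot 1}$ for arbitrary mixed-size $\mathbf{c}$: the elimination equations directly cover only $\mathbf{c}=1\cdot(n{+}1)$, and it is precisely the rewiring theorem that promotes these individual rules to the general statement, which is why its prior placement in the paper is essential.
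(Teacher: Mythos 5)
Your proof is correct, but it follows a genuinely different route from the paper's. The paper does not argue by structural induction on terms: it extends the wire--situation analysis from its proof of Theorem~\ref{thm:rw}, calling a diagram \emph{expanded} when no big wire sits in a bad situation (between a gatherer and a divider, or adjacent to a generator), observing that expanded diagrams are exactly those of the form $\Gamma_\mathbf{b}\circ\nu\circ\Delta_\mathbf{a}$ with $\nu$ living on simple wires, and then normalising an arbitrary diagram by induction on the size of the largest big wire in a bad situation, shrinking such wires with the expansion and elimination rules. You instead concentrate all the rewiring content into the two identities $\Gamma_\mathbf{c}\circ\Delta_\mathbf{c}=id_\mathbf{c}$ and $\Delta_\mathbf{c}\circ\Gamma_\mathbf{c}=id_{[\mathbf{c}]\cdot 1}$, discharge them via Theorem~\ref{thm:rw} (a direct induction on $\mathbf{c}$ from $Exp$ and $Elim$ would do just as well, so the rewiring theorem is convenient here rather than essential), and reduce everything else to a term induction whose composition and tensor cases need only functoriality and strict monoidality of $S_1$ and $\{\_\}$ --- both true by construction, though worth stating, since Lemma~\ref{funk} only asserts functoriality. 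What the paper's normalisation buys is an explicit normal form (the expanded form) that is reused in the proofs of Lemmas~\ref{univ} and~\ref{equiv} and of Theorem~\ref{stbox}; what your induction buys is modularity and freedom from the situation bookkeeping, whose case analysis in the paper even appeals to an ``unfold equation'' that is never formally introduced. Both arguments establish the theorem.
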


\noindent A proof is given in \ref{pr:struct}. 

\vspace{0.2cm}

Starting with a graphical language for permutations $\mathbb{P}$ we have $\mathcal{S}\mathbb{P}=\textbf{W}$ and then we recover the rewiring theorem. From this result it follows that the scalable construction enjoys a universal property:

\begin{lemma}[Universal property of $\mathcal{S}\textbf{L}$]\label{univ}
	The following diagram is a pullback square:
	\ctikzfig{pullback}
\end{lemma}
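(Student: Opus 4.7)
The plan is to reduce the universal property to Theorem~\ref{struct}, which gives the canonical form $\omega = \Gamma_\mathbf{b}\circ S_1(\{\omega\})\circ\Delta_\mathbf{a}$ for every morphism $\omega\in\mathcal{SL}[\mathbf{a},\mathbf{b}]$. This canonical form already asserts that every morphism is recoverable from its type data together with its underlying $\textbf{L}$-morphism, which is essentially what a pullback square over $\textbf{L}$ and the wire category $\textbf{W}$ (mediated by $\mathbf{N}$) records. First I would verify that the square commutes: both composites to the bottom-right corner compute the global size on objects, and on morphisms the commutativity is automatic since the target corner is (the image in) the terminal monochromatic prop $\mathbf{N}$, which has a unique morphism of any given type.

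For the existence of the mediating morphism, given a competing cone formed by a prop $\textbf{X}$ and morphisms $p:\textbf{X}\to\textbf{L}$ and $q:\textbf{X}\to\textbf{W}$ agreeing over $\mathbf{N}$, I would define $h:\textbf{X}\to\mathcal{S}\textbf{L}$ by $h(X)\df q(X)$ on objects and $h(f:X\to Y)\df\Gamma_{q(Y)}\circ S_1(p(f))\circ\Delta_{q(X)}$ on morphisms, directly mimicking the canonical form. The cone-compatibility condition (agreement over $\mathbf{N}$) ensures the types of $p(f)$ and $\Delta_{q(X)}$, $\Gamma_{q(Y)}$ match. Verifying that $h$ is a symmetric strict monoidal functor splits into: (i) identity preservation, via $\Gamma_\mathbf{a}\circ\Delta_\mathbf{a} = id_\mathbf{a}$, derived from $Exp$ together with the rewiring theorem (Theorem~\ref{thm:rw}); (ii) composition preservation, by inserting $\Delta_{q(Y)}\circ\Gamma_{q(Y)} = id$ at the intermediate type via $Elim$ (again lifted to arbitrary wire types by rewiring) and using functoriality of $S_1$ (Lemma~\ref{funk}); (iii) tensor preservation, from the iterated-tensor definitions of $\Gamma$ and $\Delta$ combined with the monoidality of $p$, $q$, and $S_1$; and (iv) symmetry preservation, inherited from $p$ and $q$.

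For uniqueness, any other mediating morphism $h'$ must agree with $h$ on objects (determined by the projection to $\textbf{W}$) and satisfy $\{h'(f)\} = p(f)$ by commutativity with the wire stripper. The structure theorem applied to the morphism $h'(f)\in\mathcal{S}\textbf{L}$ then forces $h'(f) = \Gamma_{q(Y)}\circ S_1(\{h'(f)\})\circ\Delta_{q(X)} = \Gamma_{q(Y)}\circ S_1(p(f))\circ\Delta_{q(X)} = h(f)$, so $h' = h$.

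The hard part will be composition preservation in the functoriality check: for $f:X\to Y$ and $g:Y\to Z$, one must show that $\Gamma_{q(Z)}\circ S_1(p(g))\circ\Delta_{q(Y)}\circ\Gamma_{q(Y)}\circ S_1(p(f))\circ\Delta_{q(X)}$ simplifies to $\Gamma_{q(Z)}\circ S_1(p(g\circ f))\circ\Delta_{q(X)}$. This requires the iterated form of the elimination equation at an arbitrary object $q(Y)\in\N^*$---not just the base size-one instance built into $Elim$---whose validity is assured uniformly by the rewiring theorem. The monoidality check is in the same spirit: it relies on the distribution of $\Gamma$ and $\Delta$ over tensor products, which again reduces to rewriting inside $\textbf{W}$ and is covered by the rewiring theorem.
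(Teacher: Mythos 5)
Your construction of the mediating functor and your uniqueness argument are exactly the paper's: define $h(f)\df\Gamma_{q(Y)}\circ S_1(p(f))\circ\Delta_{q(X)}$, check compatibility with the two projections, and invoke the structure theorem (Theorem~\ref{struct}) to force any competing mediating morphism into the same canonical form. The one genuine problem is your identification of the square itself. You take the second corner of the cospan to be the wire prop $\textbf{W}$, whereas the paper's square has the codiscrete prop $\mathbf{1}_{\N}$ (the terminal $\N$-colored prop) there, with legs $\{\_\}:\mathcal{S}\textbf{L}\to\textbf{L}$ and $!:\mathcal{S}\textbf{L}\to\mathbf{1}_{\N}$ sitting over $!:\textbf{L}\to\mathbf{N}$ and $[\_]:\mathbf{1}_{\N}\to\mathbf{N}$. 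With $\textbf{W}$ in that corner the statement is not even well posed: by the rewiring theorem (Theorem~\ref{thm:rw}) one has $\textbf{W}[\mathbf{a},\mathbf{b}]=\emptyset$ whenever $[\mathbf{a}]\neq[\mathbf{b}]$, so there is no prop morphism $\mathcal{S}\textbf{L}\to\textbf{W}$ as soon as $\textbf{L}$ contains an arity-changing morphism; and even between objects of equal global size the hom-set of $\textbf{W}$ is a full symmetric group rather than a point, so the pullback of $\textbf{L}$ and $\textbf{W}$ over the codiscrete $\mathbf{N}$ would have hom-sets $\textbf{W}[\mathbf{a},\mathbf{b}]\times\textbf{L}[[\mathbf{a}],[\mathbf{b}]]$, strictly larger than $\mathcal{S}\textbf{L}[\mathbf{a},\mathbf{b}]\cong\textbf{L}[[\mathbf{a}],[\mathbf{b}]]$.

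That said, your own argument tacitly corrects the error: you only ever use $q$ through its action on objects together with the vacuous morphism data of a codiscrete target, which is precisely what a functor into $\mathbf{1}_{\N}$ amounts to. Replace $\textbf{W}$ by $\mathbf{1}_{\N}$ throughout and everything you wrote goes through; your explicit verification that $h$ is a symmetric strict monoidal functor (identities via $Exp$, composition via $Elim$ lifted to arbitrary wire types by rewiring, monoidality from the tensor decomposition of $\Gamma$ and $\Delta$) is in fact more careful than the paper's, which simply asserts that $h$ is a functor.
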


\noindent A proof is given in \ref{pr:univ}. 

\vspace{0.2cm}

The universal property allows to lift interpretation functors.

\begin{lemma}[Scaled interpretation]\label{part}
	Given a monochromatic prop $\textbf{C}$ and an interpretation $\interp{\_}: \textbf{L}\to \textbf{C}$, let $\textbf{C}_p$ the be the symmetric strict monoidal category whose objects are pairs $(n,p_n)$ where $p_n$ is a partition of $n$, and such that $\textbf{C}_p[(n,p_n),(m,p_m)]=\textbf{C}[n,m]$. $\textbf{C}_p$ is an $\N$-colored prop and there is a unique $\N$-colored prop morphism $\interp{\_}_p: \mathcal{S}\textbf{L}\to \textbf{C}_p$ such that $[\_] \circ \interp{\_}_p = \interp{\_}\circ \{\_\}$. We call it the \textbf{scaled interpretation} and it is faithful iff $\interp{\_}$ is faithful.
\end{lemma}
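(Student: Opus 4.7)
The plan is to construct $\interp{\_}_p$ by hand on the generators of $\mathcal{SL}$ and then verify that every defining equation is respected, so that the assignment descends from the free prop over $\Sigma_{\mathcal{SL}}$ to $\mathcal{S}\textbf{L}$. On objects, I send $\mathbf{a}\in \N^*$ to the pair $([\mathbf{a}],\mathbf{a})$, which tautologically exhibits $\mathbf{a}$ as a partition of $[\mathbf{a}]$ (this also shows that $\interp{\_}_p$ is a bona fide $\N$-colored prop morphism, i.e.\ the identity on colors). On a generator $g\in \Sigma_\mathcal{L}[n,m]$, viewed in $\mathcal{SL}$ as having type $n\cdot 1 \to m\cdot 1$, I set $\interp{g}_p := \interp{g}\in \textbf{C}[n,m]=\textbf{C}_p[(n,n\cdot 1),(m,m\cdot 1)]$. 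On $\delta_n$ and $\gamma_n$, I send both to the identity of the appropriate object in $\textbf{C}$, reindexed so that source and target carry the partitions $1\cdot(n+1)$ and $(1\cdot 1)\boxtimes (1\cdot n)$. Because the hom-sets of $\textbf{C}_p$ are literally those of $\textbf{C}$ (only objects get a partition decoration), composition and tensor in $\textbf{C}_p$ reduce to those of $\textbf{C}$ on the nose.

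To promote this to a prop morphism out of $\mathcal{S}\textbf{L}$, I check the defining equations of $\mathcal{SL}=\mathcal{L}^\N + \mathcal{W}$: the equations of $\mathcal{L}$ are preserved because $\interp{\_}$ is already a well-defined functor on $\textbf{L}$, while the expansion and elimination equations both collapse to $id_{n+1}=id_{n+1}$ in $\textbf{C}$. The coproduct and quotient descriptions of Lemmas~\ref{copr} and~\ref{coeq} then give a unique $\N$-colored prop morphism $\interp{\_}_p : \mathcal{S}\textbf{L}\to \textbf{C}_p$. The condition $[\_]\circ \interp{\_}_p = \interp{\_}\circ \{\_\}$ is verified on generators---both sides send $g$ to $\interp{g}$ and $\delta_n,\gamma_n$ to the identity---and extends by functoriality. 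Uniqueness is immediate: since $[\_]:\textbf{C}_p\to \textbf{C}$ is the identity on morphisms, the commutativity condition pins down $\interp{\_}_p$ morphism-by-morphism.

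For the faithfulness equivalence, the main levers are the structure theorem~\ref{struct} and Lemma~\ref{wstrip}. For ``$\Leftarrow$'', given $\omega_1,\omega_2\in \textbf{L}[n,m]$ with $\interp{\omega_1}=\interp{\omega_2}$, applying the commutation to $S_1(\omega_i)$ and using $\{S_1(\omega_i)\}=\omega_i$ shows that $[\interp{S_1(\omega_1)}_p]=[\interp{S_1(\omega_2)}_p]$; since $[\_]$ is injective on homs they already agree in $\textbf{C}_p$, so faithfulness of $\interp{\_}_p$ followed by wire stripping yields $\omega_1=\omega_2$. For ``$\Rightarrow$'', take $\omega_1,\omega_2\in \mathcal{S}\textbf{L}[\mathbf{a},\mathbf{b}]$ with $\interp{\omega_1}_p=\interp{\omega_2}_p$, decompose each via Theorem~\ref{struct} as $\omega_i=\Gamma_\mathbf{b}\circ S_1(\{\omega_i\})\circ \Delta_\mathbf{a}$, and cancel the images of the invertible factors $\Gamma_\mathbf{b}$ and $\Delta_\mathbf{a}$ to obtain $\interp{S_1(\{\omega_1\})}_p=\interp{S_1(\{\omega_2\})}_p$; applying $[\_]$ and the commutation gives $\interp{\{\omega_1\}}=\interp{\{\omega_2\}}$, faithfulness of $\interp{\_}$ yields $\{\omega_1\}=\{\omega_2\}$, and the structure decomposition then forces $\omega_1=\omega_2$. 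The only real bookkeeping hazard is keeping track of which prop a morphism lives in---$\textbf{C}_p$ or $\textbf{C}$---when comparing hom-sets, but because $[\_]$ only restructures the object-level this is purely notational rather than a substantive obstacle.
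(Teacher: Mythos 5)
Your proof is correct, but it takes a genuinely different route from the paper's. The paper's argument is a three-line appeal to universal properties: it observes that $\textbf{C}_p$ is by construction the pullback of $!:\textbf{C}\to \mathbf{N}$ and $[\_]:\mathbf{1}_{\N}\to \mathbf{N}$, so the cone formed by $\interp{\_}\circ\{\_\}$ and the unique map $\mathcal{S}\textbf{L}\to \mathbf{1}_{\N}$ (which agree after projecting to $\mathbf{N}$, since global size is preserved) induces the mediating morphism $\interp{\_}_p$ together with its uniqueness in one stroke; faithfulness then follows because the pullback projections are jointly monic and the projection to the codiscrete $\mathbf{1}_{\N}$ contributes nothing, so $[\_]:\textbf{C}_p\to\textbf{C}$ is faithful and the whole question reduces to the faithfulness of $\interp{\_}\circ\{\_\}$. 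You instead build $\interp{\_}_p$ by hand on generators, check the expansion/elimination equations collapse to identities, invoke Lemmas~\ref{copr} and~\ref{coeq} to descend to $\mathcal{S}\textbf{L}$, and prove both directions of the faithfulness equivalence directly from Theorem~\ref{struct} and Lemma~\ref{wstrip}. Your key observation---that $[\_]$ is bijective on hom-sets, so the commutation condition pins down $\interp{\_}_p$ morphism by morphism---is exactly the concrete shadow of the paper's joint-monicity argument, and it correctly delivers uniqueness. What your approach buys is self-containedness and an explicit description of where each generator goes; what the paper's buys is brevity and the fact that existence, uniqueness, and the faithfulness transfer all drop out of one pullback diagram. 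Two cosmetic remarks: in your forward direction of the faithfulness argument, cancelling the invertible images of $\Gamma_\mathbf{b}$ and $\Delta_\mathbf{a}$ is an unnecessary detour, since $\interp{\omega_1}_p=\interp{\omega_2}_p$ already gives $\interp{\{\omega_1\}}=\interp{\{\omega_2\}}$ after applying $[\_]$ and the commutation; and you should note in passing that symmetries are handled automatically because any $\N$-colored prop morphism preserves them.
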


\noindent A proof is given in \ref{pr:part}. 

\vspace{0.2cm}

These results together point out that the scalable construction is just a tool allowing diagrammatical manipulation and is completely orthogonal to the original language. In fact, we even have an equivalence of categories.

\begin{lemma}\label{equiv}
	In $\textbf{SymMonCat}$ we have $\textbf{L}\simeq\mathcal{S}\textbf{L}$.
\end{lemma}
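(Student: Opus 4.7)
The plan is to show that the pair of symmetric strict monoidal functors $S_1\colon \textbf{L}\to \mathcal{S}\textbf{L}$ (from Lemma \ref{funk}) and the wire stripper $\{\_\}\colon \mathcal{S}\textbf{L}\to \textbf{L}$ form an equivalence. Composing one way gives the identity on the nose: on objects $\{S_1(n)\}=[n\cdot 1]=n$, and on morphisms Lemma \ref{wstrip} gives exactly $\{S_1(\omega)\}=\omega$, so $\{\_\}\circ S_1 = \mathrm{id}_{\textbf{L}}$.

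For the other direction, I would build a natural isomorphism $\eta\colon \mathrm{id}_{\mathcal{S}\textbf{L}}\Rightarrow S_1\circ \{\_\}$ whose component at $\mathbf{a}$ is the divider cascade $\eta_\mathbf{a}\df\Delta_\mathbf{a}\colon \mathbf{a}\to [\mathbf{a}]\cdot 1 = S_1(\{\mathbf{a}\})$. Each $\Delta_\mathbf{a}$ is an isomorphism in $\mathcal{S}\textbf{L}$ with inverse $\Gamma_\mathbf{a}$, which is a direct consequence of the expansion and elimination equations of the wire calculus (indeed $\Gamma_\mathbf{a}\circ\Delta_\mathbf{a}=\mathrm{id}_\mathbf{a}$ and $\Delta_\mathbf{a}\circ\Gamma_\mathbf{a}=\mathrm{id}_{[\mathbf{a}]\cdot 1}$ follow inductively from $exp_n$ and $elim_n$). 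Naturality at $\omega\colon \mathbf{a}\to\mathbf{b}$ is where the work actually happens, and it is handled essentially for free by the structure theorem \ref{struct}: pre-composing $\omega=\Gamma_\mathbf{b}\circ S_1(\{\omega\})\circ \Delta_\mathbf{a}$ with $\Delta_\mathbf{b}$ yields
\[
\Delta_\mathbf{b}\circ \omega \;=\; \Delta_\mathbf{b}\circ\Gamma_\mathbf{b}\circ S_1(\{\omega\})\circ \Delta_\mathbf{a} \;=\; S_1(\{\omega\})\circ \Delta_\mathbf{a},
\]
which is precisely the naturality square for $\eta$.

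To upgrade this to an equivalence in $\textbf{SymMonCat}$, I would verify monoidality. Both $S_1$ and $\{\_\}$ are strict symmetric monoidal: on objects $\{\mathbf{a}\boxtimes\mathbf{b}\}=[\mathbf{a}]+[\mathbf{b}]=\{\mathbf{a}\}+\{\mathbf{b}\}$ and $S_1$ is strict by Lemma \ref{funk}. The natural transformation $\eta$ is monoidal because the definition $\Delta_\mathbf{a}\df \bigotimes_i \Delta_{n_i}$ gives immediately $\Delta_{\mathbf{a}\boxtimes\mathbf{b}}=\Delta_\mathbf{a}\otimes \Delta_\mathbf{b}$ and $\Delta_0=\mathrm{id}_0$, and compatibility with the symmetry reduces to checking that swapping two big wires and then dividing coincides, up to permutations, with dividing first and then swapping the resulting blocks, which is exactly the content of the rewiring theorem \ref{thm:rw} applied inside $\mathcal{S}\textbf{L}$.

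The main delicate point I expect is well-definedness of the wire stripper $\{\_\}$ as a symmetric monoidal functor out of the quotient $\mathcal{S}\textbf{L}$: one needs to check that sending $\delta_k,\gamma_k$ to identities in $\textbf{L}$ is compatible with the $Exp$ and $Elim$ equations (both sides become $\mathrm{id}_k$, so this is trivial) and that it extends uniformly across the sum $\mathcal{L}^{\N}+\mathcal{W}$ via the universal property of free props (the adjunction $F\dashv U$ together with Lemma \ref{copr} and Lemma \ref{coeq}). Once that bookkeeping is done, the equivalence is witnessed by $(S_1,\{\_\},\mathrm{id},\eta)$.
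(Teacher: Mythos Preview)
Your argument is correct and relies on the same key input as the paper (the structure theorem \ref{struct}), but you package the conclusion differently. The paper's proof is the one-liner: the wire stripper $\{\_\}$ is essentially surjective (every $n$ is $[n\cdot 1]$), and the structure theorem says $\omega\mapsto \{\omega\}$ is a bijection $\mathcal{S}\textbf{L}[\mathbf{a},\mathbf{b}]\to \textbf{L}[[\mathbf{a}],[\mathbf{b}]]$, hence $\{\_\}$ is fully faithful; this already gives an equivalence in $\textbf{SymMonCat}$ since $\{\_\}$ is strict symmetric monoidal. You instead exhibit the quasi-inverse $S_1$ together with the unit $\eta_\mathbf{a}=\Delta_\mathbf{a}$ explicitly, and invoke the structure theorem for naturality rather than for bijectivity on hom-sets. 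Your route produces strictly more data (the actual adjoint equivalence, with a monoidal natural isomorphism), at the cost of a few extra verifications; the paper's route is shorter but leaves the pseudo-inverse implicit. One small caveat: Lemma \ref{funk} as stated only asserts that $S_k$ is a functor, not that it is strict symmetric monoidal, so your appeal to it for monoidality of $S_1$ is really an appeal to the construction behind it (the evident prop morphism induced by the color map $*\mapsto 1$); this is immediate, but worth saying rather than citing \ref{funk}.
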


\noindent A proof is given in \ref{pr:equiv}. 

\vspace{0.2cm}

We expect most of the properties of $\textbf{L}$ to be reflected in $\mathcal{S}\textbf{L}$. Here are some specific examples. If $\textbf{L}$ is a dagger category then $\mathcal{S}\textbf{L}$ inherits this structure by setting ${\delta_k}^\dagger \df \gamma_k$. Then we have ${\gamma_k}^\dagger = \delta_k$ and the expansion and elimination equations state that dividers and gatherers are unitary maps.

If $\textbf{L}$ is a compact closed category then so is  $\mathcal{S}\textbf{L}$. Using the scaled version of the cups and caps, we have $\tikzfig{bigcup0}\df \tikzfig{bigcup1}$ and then ${\gamma_k}^t = \delta_k$. However, note that some intuitive topological moves 
do not hold: $\tikzfig{trans0}= \tikzfig{trans1}\neq\tikzfig{trans2}$.

\noindent \textbf{Remark}: Another possibility is to take $\tikzfig{abigcup0}\df \tikzfig{abigcup1}$. Then we recover the topology but we loose the correspondance between equations on simple wires and their scaled version.

\section{The box construction}

In this section, we focus on some fundamental graphical languages. We use the associated completeness results to compress the corresponding diagrams with a box construction.
We consider a way to construct an $\N$-colored graphical language from a monochromatic one: the \textbf{box construction}. The idea is to blackbox a monochromatic prop into a large scale graphical language.

\begin{definition}[$Box_\textbf{P}$]
	  
	Given a monochromatic prop  $\textbf{P}$, let  $\textbf{P}^{\N}$ be the $\N$-colored graphical language defined as 
	$\Sigma_{\textbf{P}^{\N}}[\mathbf{a},\mathbf{b}]\df \begin{cases}U\textbf{P}[n,m] &\text{when } (\mathbf{a},\mathbf{b})=(1\cdot n, 1\cdot m) \\\emptyset&\text{otherwise}\end{cases}$, 
	and no equation. For every morphism $f:n\to m$ of $\textbf{P}$ the corresponding  generator in $\textbf{P}^{\N}$  is denoted $\square_f$.
	
	The \textbf{box graphical language} is defined as $Box_\textbf{P}\df \quot{\left(\textbf{P}^{\N}+\mathcal{W}\right)}{\left(Swap, Comp, Tens\right)}$, where swap, Comp, Tens are the following equations:  
	
	\noindent $\bullet$ The \textbf{Swap equation} $swap$ is: $\square_{\sigma_{1,1}}=\gamma_{2} \circ \sigma_{1,1} \circ \delta_{2}$, pictorially: $\tikzfig{swa0}~\stackrel{swap\label{swa}}{=}~\tikzfig{swa1}$.
	
	\noindent $\bullet$ The \textbf{composition equation} $comp_{f,g}$, associated to two morphisms $f:n\to k$ and  $g:k\to m$ of $\textbf{P}$, is  $\square_{g\circ f}=\square_{g} \circ \square_{f}$, pictorially: $\tikzfig{com0}~\stackrel{comp\label{com}}{=}~\tikzfig{com1}$. The set of all composition equations for every $f$ and $g$ is denoted $Comp$.
	
	\noindent $\bullet$ The \textbf{tensor equation} $tens_{f,g}$,  associated to two morphisms  $f:n\to m$ and  $g:k\to l$ of $\textbf{P}$, is $\square_{f\otimes g}= \gamma_{m+l}\circ \left(\square_f \boxtimes  \square_g \right)\circ \delta_{n+k}$, pictorially: $\tikzfig{ten0}\quad\stackrel{tens\label{ten}}{=}\quad\tikzfig{ten1}$. The set of all tensor equations for every $f$ and $g$ is denoted $Tens$.

\end{definition}

The  $\N$-colored prop associated to the \textbf{box graphical language} $Box_\textbf{P}$ is denoted $Box\textbf{P}$. Notice that the box graphical language has one generator for each morphism in $\textbf{P}$. 

From this definition follows directly the existence of a functor $B: \textbf{P}\to Box\textbf{P}$ defined on each morphism $f:n\to m$ by $B(f)=\Delta_m \circ \pi_{Box_\textbf{P}}(\square_f )\circ \Gamma_n$. The equations of $Box_\textbf{P}$ state exactly that $B$ is a symmetric strict monoidal functor. We also have a functor $O: Box\textbf{P}\to \textbf{P}$ defined on generators by $O(\pi_{Box_\textbf{P}}(\square_f))=f$. We have $O\circ B= id_\textbf{P}$.

As with scalable construction, we also have a structure theorem: 

\begin{theorem}[Structure of $Box\textbf{P}$]\label{stbox}
	For each $\omega\in Box(\textbf{P})[\mathbf{a},\mathbf{b}]$ we have $Box(\textbf{P})\vdash \omega=\Gamma_\mathbf{b} \circ B(O(\omega))\circ \Delta_\mathbf{a}$.
\end{theorem}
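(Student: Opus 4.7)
The plan is to proceed by induction on the structure of a representative diagram of $\omega$ in the free prop $F(\Sigma_{Box_\textbf{P}})$, exactly mirroring the proof of Theorem~\ref{struct}. Two preliminary facts do the heavy lifting. First, inside $\mathcal{W} \subseteq Box\textbf{P}$ the identities $\Gamma_\mathbf{a} \circ \Delta_\mathbf{a} = id_\mathbf{a}$ and $\Delta_\mathbf{b} \circ \Gamma_\mathbf{b} = id_{[\mathbf{b}]\cdot 1}$ hold; these follow either directly from the rewiring theorem (both sides are morphisms in $\mathcal{W}$ of matching types inducing the identity permutation on $[\mathbf{a}]$ wires) or by a nested induction using $Exp$ and $Elim$. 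Second, the assignment $\square_f \mapsto f$, $\delta_k \mapsto id_{k+1}$, $\gamma_k \mapsto id_{k+1}$ extends to a symmetric strict monoidal functor $O : Box\textbf{P} \to \textbf{P}$, since it evidently respects the three defining equations of $Box_\textbf{P}$ (both sides of $Swap$, $Comp$, $Tens$ map to $\sigma_{1,1}$, $g\circ f$, $f\otimes g$ respectively) together with the $\mathcal{W}$-equations (both sides map to an identity in $\textbf{P}$). The functor $B$ is symmetric strict monoidal by construction, and $O \circ B = id_\textbf{P}$.

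For the base cases, the identity case $\omega = id_\mathbf{a}$ reduces to the preliminary lemma. For a box generator $\omega = \square_f$ with $f : n \to m$, we expand $B(O(\square_f)) = B(f) = \Delta_m \circ \square_f \circ \Gamma_n$ and cancel the outer $\Gamma_m \circ \Delta_m$ and $\Gamma_n \circ \Delta_n$ to recover $\square_f$. For $\omega = \delta_k$ (and dually $\gamma_k$), $B(O(\delta_k)) = id_{(k+1)\cdot 1}$, so unfolding $\Delta_{k+1} = (id_1 \boxtimes \Delta_k) \circ \delta_k$ and $\Gamma_{1\cdot 1 \boxtimes 1 \cdot k} = id_1 \boxtimes \Gamma_k$ reduces the claim to $(id_1 \boxtimes (\Gamma_k \circ \Delta_k)) \circ \delta_k = \delta_k$, again by the preliminary lemma. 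For symmetry generators $\sigma_{\mathbf{a},\mathbf{b}}$ of $Box\textbf{P}$, both sides are morphisms in $\mathcal{W}$ inducing the same block permutation on $[\mathbf{a}]+[\mathbf{b}]$ wires, so the rewiring theorem yields equality.

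The inductive steps are routine. For a composition $\omega = \omega_2 \circ \omega_1$ through an intermediate object $\mathbf{b}$, plug in the two inductive hypotheses and cancel the inner $\Delta_\mathbf{b} \circ \Gamma_\mathbf{b}$ by the preliminary lemma, then use functoriality of $B$ and $O$ to collapse $B(O(\omega_2)) \circ B(O(\omega_1)) = B(O(\omega_2 \circ \omega_1))$. For a tensor $\omega = \omega_1 \boxtimes \omega_2$, the equalities $\Gamma_{\mathbf{b}_1} \boxtimes \Gamma_{\mathbf{b}_2} = \Gamma_{\mathbf{b}_1 \boxtimes \mathbf{b}_2}$ and $\Delta_{\mathbf{a}_1} \boxtimes \Delta_{\mathbf{a}_2} = \Delta_{\mathbf{a}_1 \boxtimes \mathbf{a}_2}$ are immediate from the definitions, and strict monoidality of $B$ and $O$ then concludes. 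The main obstacle, if any, is organizational rather than conceptual: carefully verifying that $O$ descends to the quotient (a bookkeeping check against $Swap$, $Comp$, $Tens$, $Exp$, $Elim$) and tracking the types of $\Gamma_\mathbf{b}$ and $\Delta_\mathbf{a}$ when $\mathbf{a}$ and $\mathbf{b}$ are arbitrary lists of sizes rather than singletons. As an immediate corollary, we obtain $B \circ O \cong id_{Box\textbf{P}}$ and hence $Box\textbf{P} \simeq \textbf{P}$ in $\textbf{SymMonCat}$, paralleling Lemma~\ref{equiv}.
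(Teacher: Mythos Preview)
Your proof is correct but follows a genuinely different strategy from the paper's. The paper argues by \emph{rewriting to a normal form}: it first reduces to the case of morphisms $\omega':n\cdot 1\to m\cdot 1$, then shows every such diagram can be rewritten into ``boxed form'' $\Delta_m\circ\square_f\circ\Gamma_n$ by (i) turning all swaps into boxes via the $Swap$ rule, (ii) arranging the diagram as an alternating sequence of $\mathcal{W}$-layers and box-layers, (iii) using the $Tens$ rule to make each box span the full width, (iv) collapsing the $\mathcal{W}$-layers via the rewiring theorem, and (v) collapsing the resulting chain of boxes into a single box via $Comp$. This is also the method used for Theorem~\ref{struct}, so your remark that you ``exactly mirror'' that proof is inaccurate: the paper's proof of Theorem~\ref{struct} is likewise a normalization argument using wire ``situations'', not a structural induction.

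Your route instead packages the three box equations into the single assertion that $B$ is a symmetric strict monoidal functor (which is precisely what the paper observes just after the definition), and then runs a clean structural induction on a representative in $F(\Sigma_{Box_\textbf{P}})$, using functoriality of $B$ and $O$ together with the cancellation identities $\Gamma_\mathbf{a}\circ\Delta_\mathbf{a}=id_\mathbf{a}$ and $\Delta_\mathbf{b}\circ\Gamma_\mathbf{b}=id_{[\mathbf{b}]\cdot 1}$. This is conceptually tidier and makes the role of each equation transparent; the paper's rewriting argument, by contrast, is more algorithmic and yields an explicit normalization procedure. One small point: in your symmetry base case, the right-hand side $\Gamma_{\mathbf{b}\boxtimes\mathbf{a}}\circ B(O(\sigma_{\mathbf{a},\mathbf{b}}))\circ\Delta_{\mathbf{a}\boxtimes\mathbf{b}}$ is not literally a morphism of $\mathcal{W}$ until you invoke $Swap$ (together with $Comp$ and $Tens$ to build $\sigma_{[\mathbf{a}],[\mathbf{b}]}$ from $\sigma_{1,1}$) to eliminate the box; once that is done your appeal to the rewiring theorem goes through. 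Also, ``by construction'' slightly undersells what is needed for $B$ to be symmetric strict monoidal: it is exactly the content of the $Swap$, $Comp$ and $Tens$ equations, and is worth saying so explicitly.
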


\noindent A proof is given in \ref{pr:stbox}. 

\vspace{0.2cm}

From this given a monochromatic graphical language $\mathcal{L}$ we can define a functor $Unbox:Box\textbf{L}\to \mathcal{S}\textbf{L}$ by $Unbox(\omega)=\Gamma_\mathbf{b} \circ S_1(O(\omega))\circ \Delta_\mathbf{a}$.

\begin{lemma}\label{same}
	$Unbox$ is an equivalence of categories.
\end{lemma}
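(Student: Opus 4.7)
The plan is to show that $Unbox$ is essentially surjective, full, and faithful, using the two structure theorems (Theorem~\ref{struct} for $\mathcal{S}\textbf{L}$ and Theorem~\ref{stbox} for $Box\textbf{P}$) as the principal technical tools, together with the fact that $\Delta_\mathbf{a}$ and $\Gamma_\mathbf{a}$ are mutual inverses in $\mathcal{S}\textbf{L}$ and in $Box\textbf{L}$ (a consequence of the elimination and expansion equations that both props inherit from $\mathcal{W}$ via the sum construction).

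Essential surjectivity is immediate: the defining formula $Unbox(\omega)=\Gamma_\mathbf{b}\circ S_1(O(\omega))\circ\Delta_\mathbf{a}$ forces $Unbox$ to be the identity on objects, and $Box\textbf{L}$ and $\mathcal{S}\textbf{L}$ share the same object monoid $\N^*$. For fullness, let $\omega'\in\mathcal{S}\textbf{L}[\mathbf{a},\mathbf{b}]$. I will set $\omega\df \Gamma_\mathbf{b}\circ B(\{\omega'\})\circ \Delta_\mathbf{a}$ in $Box\textbf{L}$, where $\{\_\}$ is the wire stripper. Applying $O$ and using $O\circ B=\mathrm{id}_\textbf{L}$ together with the fact that $O$ collapses dividers and gatherers to identities in $\textbf{L}$, one computes $O(\omega)=\{\omega'\}$. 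Hence $Unbox(\omega)=\Gamma_\mathbf{b}\circ S_1(\{\omega'\})\circ\Delta_\mathbf{a}$, which by Theorem~\ref{struct} equals $\omega'$.

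For faithfulness, suppose $Unbox(\omega_1)=Unbox(\omega_2)$. Post-composing by $\Delta_\mathbf{b}$ and pre-composing by $\Gamma_\mathbf{a}$, and then cancelling the resulting dividers-on-gatherers via $\Delta_\mathbf{b}\circ\Gamma_\mathbf{b}=\mathrm{id}$ and $\Delta_\mathbf{a}\circ\Gamma_\mathbf{a}=\mathrm{id}$, yields $S_1(O(\omega_1))=S_1(O(\omega_2))$. Since Lemma~\ref{wstrip} gives $\{S_1(\_)\}=\mathrm{id}$, the functor $S_1$ is faithful, so $O(\omega_1)=O(\omega_2)$. Applying Theorem~\ref{stbox} then gives $\omega_1=\Gamma_\mathbf{b}\circ B(O(\omega_1))\circ \Delta_\mathbf{a}=\Gamma_\mathbf{b}\circ B(O(\omega_2))\circ \Delta_\mathbf{a}=\omega_2$.

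The only mild subtlety, and the place to be careful, is first checking that the formula defining $Unbox$ is genuinely functorial: preservation of identities reduces to an expansion equation $\Gamma_\mathbf{a}\circ\Delta_\mathbf{a}=\mathrm{id}_\mathbf{a}$, and preservation of composition to an elimination equation $\Delta_\mathbf{b}\circ\Gamma_\mathbf{b}=\mathrm{id}$, both of which are provable inside $\mathcal{W}$. Conceptually, the lemma is saying that $Box\textbf{L}$ and $\mathcal{S}\textbf{L}$ are two alternative presentations of the same free extension of $\textbf{L}$ by wire-resizing, each equivalent to $\textbf{L}$ itself (in line with Lemma~\ref{equiv}).
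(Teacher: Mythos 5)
Your proposal is correct and follows essentially the same route as the paper's own proof: essential surjectivity from the action on objects, fullness by exhibiting the preimage $\Gamma_\mathbf{b}\circ B(\{\omega'\})\circ\Delta_\mathbf{a}$ via the structure theorem for $\mathcal{S}\textbf{L}$, and faithfulness by cancelling dividers and gatherers, using the wire stripper to see that $S_1$ is faithful, and then invoking the structure theorem for $Box\textbf{L}$. Your extra check that the formula for $Unbox$ is functorial is a reasonable precaution that the paper leaves implicit.
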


\noindent A proof is given in \ref{pr:same}.

\vspace{0.2cm}

So $Box\textbf{L} \simeq \mathcal{S}\textbf{L}$, as a consequence, the two constructions are essentially the same.

We will mostly use the box construction on substructures of $\mathcal{L}$ to obtain box generators inside of $\mathcal{L}$. Of course they are expressible using the usual generators, but they are very useful for compressing diagrams and speed up graphical computations. We now give several examples of various kinds of boxes and some of their interactions with scaled generators.

\subsection{Symmetries and permutations}

We work in the setting of props so the graphical language of permutation is the free graphical language with no signature nor equations. Making an exception,  we describe here the language in the setting of pros in order to start with a familiar example.

\begin{definition}[permutations]
	 The monochromatic graphical language $\mathcal{P}$ has signature: $\tikzfig{swap}$ and equations: $\tikzfig{YB0}=\tikzfig{YB1}$ and $\tikzfig{sinv0}=\tikzfig{sinv1}$.
\end{definition}


\begin{lemma}
	The interpretation $\interp{\tikzfig{swap}}= (1,2)(2,1)$ makes $\mathcal{P}$ complete for the prop of permutations $\mathbb{P}$ where each morphism $f:n\to n$ is a permutation of $\{1,\dots,n\}$. Composition is the composition of permutations and the tensor product is the disjoint union.
\end{lemma}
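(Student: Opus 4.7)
My plan is to prove that the interpretation $\interp{\cdot}$ extends to an isomorphism of pros $\textbf{P}\xrightarrow{\sim}\mathbb{P}$, which establishes both fullness (completeness in the sense of the paper's definition) and faithfulness (completeness in the standard graphical-calculus sense) at once. Soundness is the first step: one evaluates $\interp{\cdot}$ on both sides of the Yang--Baxter equation (obtaining in each case the order-reversing permutation on three strands) and on both sides of the involutivity equation (obtaining $id_2$). Hence $\interp{\cdot}$ descends to a pro morphism $\textbf{P}\to\mathbb{P}$. Fullness is then immediate: every $\tau\in S_n$ factors as a product of adjacent transpositions $s_i=(i,\,i{+}1)$, and each $s_i$ is the image under $\interp{\cdot}$ of the elementary diagram $id_1^{\boxtimes(i-1)}\boxtimes\sigma\boxtimes id_1^{\boxtimes(n-i-1)}$.

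The substantive content is faithfulness, for which I would construct an explicit inverse pro morphism $\pi:\mathbb{P}\to\textbf{P}$. For each $\tau\in S_n$, fix a reduced expression $\tau=s_{i_1}\cdots s_{i_k}$ and set $\pi(\tau)$ to be the composition in $\textbf{P}$ of the corresponding elementary swap diagrams. The only non-trivial point is well-definedness: by Matsumoto's theorem, any two reduced expressions for the same permutation are related by a sequence of braid moves $s_i s_{i+1} s_i \leftrightarrow s_{i+1}s_i s_{i+1}$ and far-commutativity moves $s_i s_j \leftrightarrow s_j s_i$ for $|i-j|\geq 2$. After whiskering with identities, the braid moves are witnessed by the Yang--Baxter axiom of $\mathcal{P}$, while the far-commutativity moves are automatic in any strict monoidal category via the interchange law $(f\boxtimes g)\circ(f'\boxtimes g')=(f\circ f')\boxtimes(g\circ g')$. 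Functoriality of $\pi$ further requires reducing concatenated, a priori non-reduced, words to reduced form, which uses the involutivity axiom $s_i^2=id$ to cancel adjacent duplicates; strict monoidality of $\pi$ is immediate from the positional description of $s_i$.

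Finally, $\interp{\cdot}\circ\pi=id_\mathbb{P}$ holds by construction, and $\pi\circ\interp{\cdot}=id_\textbf{P}$ follows by reading any diagram $d\in\textbf{P}$ as a word in the adjacent-transposition generators and rewriting it, using only involutivity, YB, and monoidal interchange, into the reduced-word composite produced by $\pi$ from $\interp{d}$. The main obstacle is the Matsumoto step underpinning well-definedness of $\pi$; the rest is bookkeeping to confirm that the distant-commutativity relation, conspicuously absent from the explicit axioms of $\mathcal{P}$, is supplied silently by the bifunctoriality of $\boxtimes$.
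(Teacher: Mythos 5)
Your proof is correct, but there is nothing in the paper to compare it against: this lemma is one of the classical presentation results that the paper states without proof (unlike the other lemmas, it carries no ``A proof is given in\dots'' pointer and no appendix entry). What you give is the standard argument for the presentation of the symmetric groups as a pro, and it is complete. The three load-bearing points are all handled: soundness of the two axioms under $\interp{\_}$; well-definedness of the candidate inverse $\pi$ via Matsumoto's theorem, with the braid moves witnessed by Yang--Baxter and the far-commutations $s_is_j=s_js_i$ ($|i-j|\geq 2$) supplied by the interchange law rather than by an explicit axiom --- this last observation is exactly the reason the two stated equations suffice; and functoriality of $\pi$ plus $\pi\circ\interp{\_}=id$ via Tits' solution of the word problem (braid moves plus deletion of $s_is_i$, the deletions being instances of the involutivity axiom). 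One small step you use implicitly and could state: every morphism of the free pro on the single generator $\sigma:2\to 2$ decomposes as a composite of whiskered copies of $\sigma$, i.e.\ as a word in the $s_i$; this is the normal-form theorem for free pros and is what lets you ``read any diagram as a word'' in the final paragraph. Note also that the paper's notion of completeness only asks for fullness of the interpretation, which your second paragraph already delivers; the isomorphism you construct is strictly stronger (it gives universality as well) and is the statement one actually wants here.
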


Given any monochromatic prop $\textbf{P}$ there is a unique prop morphism $!:\mathbb{P}\to \textbf{P}$ from the prop of permutations. Thus in any scalable prop we can use permutation boxes from $Box\mathbb{P}$ without ambiguity.

\subsection{Monoid and functions}

\begin{definition}[Commutative monoid]
	The monochromatic graphical language $\mathcal{M}$ has signature: $\left\{\tikzfig{mon},\tikzfig{unit}\right\}$ and equations:
	
	\begin{center}
		$\tikzfig{assoc0}=\tikzfig{assoc1}\qquad\tikzfig{unit0}=\tikzfig{unit1}\qquad\tikzfig{cmt0}=\tikzfig{cmt1}$
	\end{center}
\end{definition}

\begin{lemma}
	\textbf{Fun} is the prop of functions where each morphism $f:n\to m$ is a function from $\{1,\dots,n\}$ to $\{1,\dots,m\}$. Composition is the composition of functions and the tensor product is the disjoint union. The interpretation where $\interp{\tikzfig{mon}}$ is the unique function $\{1,2\}\to \{1\}$ and $\interp{\tikzfig{unit}}$ is the unique function $\emptyset\to \{1\}$, makes $\mathcal{M}$ complete for \textbf{Fun}.
\end{lemma}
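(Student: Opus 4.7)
The plan is to verify well-definedness of the interpretation and then establish fullness by exhibiting, for each function $f:\{1,\dots,n\}\to\{1,\dots,m\}$, a concrete diagram in $\mathcal{M}[n,m]$ whose interpretation is $f$.

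First I would check that the three axioms of $\mathcal{M}$ are satisfied under the proposed interpretation. Since $\interp{\tikzfig{mon}}$ is the unique function $\{1,2\}\to\{1\}$ and $\interp{\tikzfig{unit}}$ is the unique function $\emptyset\to\{1\}$, both sides of associativity interpret as the unique function $\{1,2,3\}\to\{1\}$, both sides of the unit law interpret as the identity on $\{1\}$, and both sides of commutativity interpret as the unique function $\{1,2\}\to\{1\}$. This guarantees that $\interp{\_}$ descends to a well-defined functor on the underlying prop $\mathbf{M}$ into $\mathbf{Fun}$.

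Next, for fullness, given a function $f:\{1,\dots,n\}\to\{1,\dots,m\}$, I would construct a diagram $D_f$ realizing $f$ in three steps. For each output $j\in\{1,\dots,m\}$, set $k_j=|f^{-1}(j)|$ and construct a \emph{merger} $\mu_{k_j}:k_j\to 1$ defined inductively: $\mu_0$ is the unit generator, $\mu_1$ is the identity wire, and $\mu_{k+1}$ is obtained by post-composing $\mathrm{id}_1\otimes\mu_k$ with the multiplication generator. Each $\mu_{k_j}$ interprets as the unique function $\{1,\dots,k_j\}\to\{1\}$. Tensoring yields a diagram $\mu_{k_1}\otimes\cdots\otimes\mu_{k_m}:(k_1+\cdots+k_m)\to m$ whose interpretation sends the first $k_1$ inputs to $1$, the next $k_2$ to $2$, and so on. Since $\sum_j k_j=n$, this has the right type. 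Finally, let $\sigma_f\in\mathbb{P}[n,n]$ be the permutation that reorders the inputs $1,\dots,n$ by listing first the elements of $f^{-1}(1)$ (in increasing order), then $f^{-1}(2)$, and so on; $\sigma_f$ is expressible from the symmetry generator of $\mathcal{M}$ (inherited from the prop structure). Define $D_f\df(\mu_{k_1}\otimes\cdots\otimes\mu_{k_m})\circ\sigma_f$.

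The remaining task, and the main point to verify, is that $\interp{D_f}=f$. By functoriality of $\interp{\_}$, the interpretation of $D_f$ is the composite of a permutation of $\{1,\dots,n\}$ with a function that collapses the first $k_1$ positions to $1$, the next $k_2$ to $2$, and so on; by construction of $\sigma_f$, this composite is exactly $f$. I do not expect any serious obstacle: the well-definedness check is immediate from the uniqueness of functions into a singleton, and the fullness construction is a direct syntactic realization of the preimage partition of $f$, with the bookkeeping for $\sigma_f$ being the only point requiring care.
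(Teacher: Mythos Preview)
The paper does not actually supply a proof of this lemma; it is stated as a known result (like the neighboring lemmas on $\mathcal{P}$, $\mathcal{B}$, and $\mathcal{IH}$, which are cited from the literature). So there is no in-paper argument to compare against.

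Your argument is correct and is the standard one. A couple of minor remarks: your separate base case $\mu_1=\mathrm{id}$ is redundant, since with $\mu_0$ equal to the unit the inductive step already gives $\mu_1=\tikzfig{mon}\circ(\mathrm{id}\otimes\tikzfig{unit})=\mathrm{id}$ by the unit law; and be aware that in this paper ``complete'' is defined to mean the interpretation is \emph{full} (not faithful), so your focus on fullness is exactly what is being asked, and you need not worry about faithfulness for this statement.
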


From now on, we will depict boxes as arrows to fit the notation of \cite{carette2019szx}: $\tikzfig{funarrow}\df\tikzfig{funbox}$. Every function arrow satisfies: $\tikzfig{copyfun0}=\tikzfig{copyfun1}$ and $\tikzfig{erasefun0}=\tikzfig{erasefun1}$.

\subsection{Bialgebras and matrices}

\begin{definition}[Commutative bialgebra]
	The monochromatic graphical language $\mathcal{B}$ is defined as $\mathcal{M}+\mathcal{M}^{op}$ quotiented by the equations: 
	
	\begin{center}
		$\tikzfig{bialg0}=\tikzfig{bialg1}\qquad\tikzfig{cp0}=\tikzfig{cp1}\qquad\tikzfig{cocp0}=\tikzfig{cocp1}\qquad\tikzfig{sca0}=\tikzfig{sca1}$
	\end{center}

\noindent where the generators of $\mathcal{M}$ are in white and those of $\mathcal{M}^{op}$ in black.
\end{definition}

\begin{lemma}\cite{pirashvili2002prop}
	$\mathcal{M}(\mathbb{N})$ is the prop of integer matrices where each morphism $f:n\to m$ is a matrix in $\mathcal{M}_{m\times n}(\mathbb{N})$. Composition is the matrix product and the tensor product is the direct sum. The interpretation $\interp{\tikzfig{mon}}= \begin{pmatrix}
	1&1
	\end{pmatrix}$, $\interp{\tikzfig{unit}}= ()\in \mathcal{M}_{1\times 0}(\mathbb{N})$,$\interp{\tikzfig{bcomon}}= \begin{pmatrix}
	1\\ 1
	\end{pmatrix}$ and $\interp{\tikzfig{bcounit}}= ()\in \mathcal{M}_{0\times 1}(\mathbb{N})$ makes $\mathcal{B}$  complete for $\mathcal{M}(\mathbb{N})$.
\end{lemma}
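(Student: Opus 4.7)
The plan is to establish soundness, fullness and faithfulness of the interpretation, the last being the main obstacle.

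For soundness, I verify that each defining equation of $\mathcal{B}$ holds under the given interpretation. The commutative monoid axioms for the white generators reduce to associativity, commutativity and unitality of addition in $\mathbb{N}$; the cocommutative comonoid axioms for the black generators are their transposes; and the bialgebra, copy, cocopy and scalar equations each reduce to a routine identity of small $\mathbb{N}$-matrices (for example, the bialgebra law amounts to checking that $\binom{1}{1}(1\ 1)$ factors through the composite of copying, swapping and summing on two inputs). By the universal property of $\mathcal{B}$ as a presented prop, this extends to a unique prop morphism $\interp{\_}:\textbf{B}\to \mathcal{M}(\mathbb{N})$.

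For fullness, given a matrix $A = (a_{ij}) \in \mathcal{M}_{m\times n}(\mathbb{N})$, I build a preimage explicitly: copy the $j$-th input $\sum_i a_{ij}$ times using an iterated black comonoid (applying the black counit whenever the column sum is zero), route the resulting wires by a permutation so that output $i$ collects exactly $a_{ij}$ wires from each input $j$, and sum these with an iterated white monoid (applying the white unit whenever no wire arrives). That this diagram interprets to $A$ is immediate from the fact that composition in $\mathcal{M}(\mathbb{N})$ is the matrix product and the tensor product is the direct sum.

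The main obstacle is faithfulness, for which I would employ the normal-form strategy going back to Lafont and Pirashvili. The idea is to show that every diagram in $F(\Sigma_\mathcal{B})[n,m]$ rewrites, using the equations of $\mathcal{B}$, to the shape produced above in the fullness step: a comonoid tree on each input, then a permutation (together with units and counits on any auxiliary wires), then a monoid tree on each output. The bialgebra equation is used to slide every black generator to the left of every white one; associativity and commutativity of each structure consolidate repeated generators into a single tree per wire; the copy, cocopy and scalar equations absorb or delete any dangling units and counits produced in the process. The resulting assignment of arities between the outputs of the comonoid trees and the inputs of the monoid trees is precisely the matrix interpretation of the diagram, so two diagrams with equal interpretation reduce to the same normal form and are therefore provably equal in $\mathcal{B}$, giving faithfulness and hence completeness.
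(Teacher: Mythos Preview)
The paper does not supply its own proof of this lemma: it is stated with a citation to Pirashvili \cite{pirashvili2002prop} and no argument is given in the body or the appendix. Your sketch is correct and is precisely the standard normal-form argument one finds in the cited literature (Lafont, Pirashvili): push all comonoid structure left of all monoid structure using the bialgebra law, consolidate trees using (co)associativity and (co)commutativity, and read off the matrix from the resulting bipartite multigraph. There is nothing to contrast here beyond noting that the paper treats the result as folklore while you have reconstructed the expected proof.
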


A matrix arrow indexed by $A:n\to m$ corresponds to a bipartite multigraph between $n$ black vertices and $m$ white vertices. $A$ is nothing but the biadjacency matrix of this multigraph. Thus, in the presence of bialgebra, the box construction allows to compress a bipartite sub-diagram into a single matrix arrow.  The properties of matrix arrows generalize the ones of function boxes. The following equations hold for any matrix $A$:

\begin{center}
	\begin{tabular}{cccc}
		$\tikzfig{copymat0}=\tikzfig{copymat1}$ & $\tikzfig{coerasemat0}=\tikzfig{coerasemat1}$&
		$\tikzfig{cocopymat0}=\tikzfig{cocopymat1}$ & $\tikzfig{erasemat0}=\tikzfig{erasemat1}$
	\end{tabular}
\end{center}

Furthermore: $\tikzfig{addmat0}=\tikzfig{addmat1}$.

If we add to $\mathcal{B}$ the generator $\tikzfig{ant}$ and the equation $\tikzfig{hopf0}=\tikzfig{hopf1}$ we obtain the graphical language $\mathcal{H}$ of Hopf algebras.
$\mathcal{H}$ is complete for matrices over $\mathbb{Z}$ with $\tikzfig{mmat0}=\tikzfig{mmat1}$.

\subsection{Interacting Hopf algebras and linear relations}

\begin{definition}[Interacting Hopf algebras]
	The monochromatic graphical language $\mathcal{IH}$ is defined as $\mathcal{B}+\mathcal{B}^{op}$ quotiented by the equations: 
	\begin{center}
		\begin{tabular}{cccc}
			$\tikzfig{bfrob0}=\tikzfig{bfrob1}$ &$\tikzfig{bspe0}=\tikzfig{bspe1}$& $\tikzfig{wfrob0}=\tikzfig{wfrob1}$&$\tikzfig{wspe}=\tikzfig{wspe1}$\\[0.5cm]
			\multicolumn{4}{c}{$\tikzfig{bsca0}=\tikzfig{sca1}\qquad\tikzfig{num0}=\tikzfig{num2}=\tikzfig{num1}\qquad\tikzfig{wsac0}=\tikzfig{sca1}$}\\[0.5cm]
			&&&\\[-0.5cm]
			$\tikzfig{int4}=\tikzfig{int5}$ & $\tikzfig{int6}=\tikzfig{int7}$&  $\tikzfig{int0}=\tikzfig{int1}$&$\tikzfig{int2}=\tikzfig{int3}$
		\end{tabular}
	\end{center}
	
	The generators of $\mathcal{B}$ are depicted as before and those of $\mathcal{B}^{op}$ are depicted by exchanging black and white. The number $k$ of parallel wires must be at least $2$.
	
\end{definition}

\begin{lemma}\cite{bonchi2017interacting} 
	$\textbf{LinRel}$ is the prop of linear relation where each morphism $f:n\to m$ is a linear subspace of $\mathbb{Q}^{n+m}$. Composition is composition of relation and the tensor product is direct sum. The interpretation: $\interp{\tikzfig{mon}}= \{(x,y,z), x+y=z\}$, $\interp{\tikzfig{unit}}= \{0\}\times\{0\}$, $\interp{\tikzfig{comon}}= \{(x,y,z), x=y+z\}$, $\interp{\tikzfig{counit}}= \{0\}\times\{0\}$, $\interp{\tikzfig{bmon}}= \{(x,x,x), x\in\mathbb{Q}\}$, $\interp{\tikzfig{bunit}}= \{0\}\times\mathbb{Q}$, $\interp{\tikzfig{bcomon}}= \{(x,x,x), x\in\mathbb{Q}\}$ and  $\interp{\tikzfig{bcounit}}= \mathbb{Q}\times\{0\}$   makes $\mathcal{IH}$ complete for $\textbf{LinRel}$.

\end{lemma}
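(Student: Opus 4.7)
The statement is cited from \cite{bonchi2017interacting}, so the plan is to outline the strategy used there rather than reprove it from scratch. The proof naturally splits into soundness and completeness, the latter being by far the harder direction.

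First I would verify soundness: for each equation listed, compute the two sides under the given interpretation as linear subspaces of $\mathbb{Q}^{n+m}$ and check equality. For example, the interacting equations such as $\interp{\tikzfig{int4}}=\interp{\tikzfig{int5}}$ express that the ``black'' addition relation and the ``white'' copy relation are mutual converses up to a Frobenius structure, which is a direct calculation with subspaces. The Frobenius and special equations reflect that $\{(x,y,z): x=y+z\}$ and $\{(x,x,x):x\in \mathbb{Q}\}$ generate Frobenius algebras in $\textbf{LinRel}$, and the bialgebra/Hopf-like equations mirror the distributivity of linear combinations over copying. Soundness is systematic but lengthy.

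For universality I would exhibit, for every linear subspace $R\subseteq \mathbb{Q}^{n+m}$, a diagram in $\mathcal{IH}$ whose interpretation is $R$. The standard way is to present $R$ either as a span of matrices $n \leftarrow k \rightarrow m$ (generated image) or as a cospan $n \rightarrow k \leftarrow m$ (kernel/equation presentation) and then use scaled copies of the matrix arrows provided by $\mathcal{B}$ and $\mathcal{B}^{op}$ (valid over $\mathbb{Q}$ once one has an antipode and the inverse generators of $\mathcal{H}$, which completeness over $\mathbb Q$ requires as a further extension, the so-called $\mathcal{IH}_{\mathbb Q}$ fragment).

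The hard part is completeness: showing that two diagrams with the same interpretation are provably equal in $\mathcal{IH}$. The method of \cite{bonchi2017interacting} is the ``composing PROPs'' machinery of Lack: one recognises $\mathcal{IH}$ as built from $\mathcal{H}_{\mathbb{Q}}$ and its opposite $\mathcal{H}_{\mathbb Q}^{op}$ via a distributive law, whose composite PROP is exactly $\textbf{LinRel}$. Concretely, one defines a span/cospan normal form for diagrams — every morphism is provably equal to a shape of the form $\Gamma \circ (\text{black matrix}) \circ (\text{white matrix}) \circ \Delta$ (essentially Gaussian elimination diagrammatised) — and shows that two such normal forms denote the same linear relation iff they are related by change of basis on the intermediate wire, which is itself provable from the equations. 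The main obstacle, and the technical core of \cite{bonchi2017interacting}, is establishing this distributive law: namely, proving that the rewrites implicit in the interacting equations always let one push a ``co-operation'' past an ``operation'', so that every diagram reduces uniquely to the span-cospan normal form. Once this normal form theorem is in place, completeness follows because the semantic invariant (the linear subspace) determines the normal form up to the provable equivalences.
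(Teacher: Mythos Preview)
Your proposal is appropriate: the paper itself gives no proof of this lemma at all, treating it purely as a citation to \cite{bonchi2017interacting}, and you correctly identify this in your opening sentence. Your outline of the Bonchi--Sobocinski--Zanasi argument (soundness by direct verification, universality via span/cospan presentations of subspaces, completeness via Lack's distributive-law machinery yielding a normal form) is an accurate summary of the cited work, so there is nothing to compare against and nothing missing relative to what the paper does.
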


We can interpret matrices in $\mathcal{IH}$ and thus have matrix arrows. Moreover we have a compact structure on simple wires given by $\tikzfig{bcup}$ and $\tikzfig{bcap}$. We define backward matrix as $\tikzfig{barrow}\df\tikzfig{barrow0}$.

Backward matrix arrows have the same properties as matrix arrows but in the reverse direction. We also have:

\begin{center}
	\begin{tabular}{c}
		$A\text{ is injective } \quad\Leftrightarrow\quad \tikzfig{injcopy}=\tikzfig{injcopy1}\quad \Leftrightarrow \quad\tikzfig{injerase0}=\tikzfig{injerase1}$\\
		$A\text{ is surjective} \quad\Leftrightarrow\quad\tikzfig{surjcopy0}=\tikzfig{surjcopy1} \quad\Leftrightarrow\quad\tikzfig{surjerase0}=\tikzfig{surjerase1}$
	\end{tabular}
\end{center}

In practice we will not use linear relation boxes but only matrix arrows since any linear relation factorizes into matrix arrows. In fact, all the equations and properties given in this section so far can be summed up in the powerful formula from \cite{zanasi2018interacting}:

\begin{center}
\begin{tabular}{|c|}
	\hline \\
	$\tikzfig{span}=\tikzfig{cospan} \quad\Leftrightarrow\quad Im\begin{pmatrix}
	C\\ D
	\end{pmatrix}=Ker\begin{pmatrix}
	A& - B 
	\end{pmatrix}$\\[0.5cm]
	\hline
\end{tabular}	
\end{center}

\section{Examples in graphical languages}

In practice, when we identify a substructure in a graphical language we can use the scalable technics to manipulate boxes. However, in general those are not free and we can have extra equations between boxes. Moreover, sometimes the structures are a little different than expected and thus some adjustments need to be made on the properties of boxes.
We illustrate this by considering three concrete examples of bialgebras from categorical quantum mechanics. They appear respectively in the ZH, ZW and ZX-calculus. Notice that we use the definitions of these bialgebras given in \cite{recipe} which are essentially equivalent, but may slightly differ from the original ones. 
Each of the three graphical languages has an interpretation which make it complete for the prop of qubits where each morphism $f:n\to m$ is a matrix in $\mathcal{M}_{2^m,2^n}(\mathbb{C})$. The composition is the matrix product and the tensor is the tensor product of matrices. A basis of $\mathbb{C}^{2^n}$ is denoted by the $|x\rangle$ where $x$ is a binary word of size $n$. $e_i$ is the binary word with $0$ everywhere except in the $i$th coordinate where it is $1$.

We recall that a bialgebra corresponds to matrices over the semiring $(\mathbb{N},+,\times)$. The three bialgebras share the same comonoid defined by $\interp{\tikzfig{bcomon}}= \begin{pmatrix}
	1& 0\\ 0&0 \\ 0&0 \\ 0&1
	\end{pmatrix}$ and $\interp{\tikzfig{bcounit}}= \begin{pmatrix}
	1&1
	\end{pmatrix}$.
	
\vspace{0.3cm}
The \textbf{ZH bialgebra} is defined by: $\interp{\tikzfig{mon}}= \begin{pmatrix}
	1&1&1&0\\0&0&0&1
	\end{pmatrix}$ and $\interp{\tikzfig{unit}}= \begin{pmatrix}
	1\\0
	\end{pmatrix}$.

We have $\tikzfig{two}=\tikzfig{one}$, which essentially means ``2=1''. As a consequence, we are working with matrices over the semiring $\quot{(\mathbb{N},+,\times)}{(2=1)}$. This is exactly the boolean semiring $\left(\lor,\land\right)$. There is no more quotienting since $A|e_j\rangle= |A_j\rangle $ where $A_j$ is the $j$th column of $A$. All $\{0,1\}$-matrix arrows are different.

\vspace{0.3cm}

The \textbf{ZW bialgebra} is defined by $\interp{\tikzfig{mon}}= \begin{pmatrix}
	1&0&0&0\\0&1&1&0
	\end{pmatrix}$ and $\interp{\tikzfig{unit}}= \begin{pmatrix}
	1\\0
	\end{pmatrix}$.

We have $\tikzfig{three}=\tikzfig{two}$. Thus,  we are working with matrices over the semiring $\quot{(\mathbb{N},+,\times)}{(2=3)}$. However, not all $\{0,1,2\}$-matrices have a distinct interpretation. If $A$ is a $\{0,1\}$-matrix then $A|e_j\rangle= |A_j\rangle $, but if $A$ has a $2$ in the $j$th column then for all $x$ such that $x_j =1$ we have $A|x\rangle=0$. We cannot distinguish the coefficients in a column with a $2$. So the arrows corresponds to matrices with only $\{0,1\}$-coefficients except in some columns which are full of $2$s.

\vspace{0.3cm}

The \textbf{ZX bialgebra} is defined by $\interp{\tikzfig{mon}}= \begin{pmatrix}
	1&0&0&1\\0&1&1&0
	\end{pmatrix}$ and $\interp{\tikzfig{unit}}= \begin{pmatrix}
	1\\0
	\end{pmatrix}$.

We have $\tikzfig{two}=\tikzfig{zero}$. We are working with matrices over the semiring $\quot{(\mathbb{N},+,\times)}{(2=0)}$. This is the field $\mathbb{F}_2$. There is no more quotienting since $A|e_j\rangle= |A_j\rangle $. All $\{0,1\}$-matrices have a distinct interpretation. As the presence of a field suggests we can extend ZX into a \textbf{scaled} interacting Hopf algebras. This means the equational theory of $\mathcal{IH}$ holds up to some scalars. We can still work with linear relations over $\mathbb{F}_2$ but the main equations needs to be renormalized as follow:

\begin{center}
	
		$\tikzfig{span}=\tikzfig{zxcospan} \quad\Leftrightarrow\quad Im\begin{pmatrix}
		C\\ D
		\end{pmatrix}=Ker\begin{pmatrix}
		A& B 
		\end{pmatrix}$
\end{center}

Where $k\df dim\left(Ker\begin{pmatrix}C\\D\end{pmatrix}\right)$ and $\interp{\bigstar}=\frac{1}{\sqrt{2}}$. This equation subsumes most of the properties of the $\mathcal{S}ZX$-calculus of \cite{carette2019szx}.

\clearpage



\bibliography{scal}

\appendix

\section{Proofs}

\subsection{Proofs for section 1}

\begin{proof}[Proof of Lemma \ref{glcat}]\phantomsection\label{glcat:pr}
	We will use the string diagrams notation for natural transformations. See \cite{curien2008joy}.
	A translation is pictured: \tikzfig{translat}. The soundness condition states that there is a prop morphism $Bold(\theta)$ such that: $\tikzfig{condb0}=\tikzfig{condb1}$.
	The composition between to translations $\theta:\mathcal{L}\to \mathcal{Y}$ and $\gamma:\mathcal{Y}\to \mathcal{K}$ is defined as $\gamma\theta\df\mu_{\Sigma_\mathcal{K}}\circ T\gamma\circ \theta$, pictorially: 
	
	\begin{center}
		$\tikzfig{transcomp0}\df\tikzfig{transcomp1}$
	\end{center}

	We need to check the soundness condition for the composition. We have:
	
	\begin{center}
		$\tikzfig{transcompcond0}=\tikzfig{transcompcond1}=\tikzfig{transcompcond2}$
	\end{center}
	
	So the soundness condition is satisfied with $Bold(\gamma\theta)=Bold(\gamma)\circ Bold(\theta)$.
	
	Given three translations $\theta:\mathcal{L}\to \mathcal{Y}$, $\gamma:\mathcal{Y}\to \mathcal{K}$ and  $\omega:\mathcal{K}\to \mathcal{M}$ we have: 
	
	\begin{center}
		$\omega(\gamma\theta)=\tikzfig{trassoc0}=\tikzfig{trassoc1}=(\omega\gamma)\theta$
	\end{center}

	So the composition is associative.
	
	
	
	
	The unit is defined as $id_\mathcal{Y}\df \eta_{\Sigma_\mathcal{Y}}$, pictorially: 
	
	\begin{center}
		$\tikzfig{transid0}\df\tikzfig{transid1}$
	\end{center}
	
	The soundness condition is satisfied: $\tikzfig{condid0}=\tikzfig{condid1}$ and we have $Bold(id_\mathcal{Y})=id_\textbf{Y}$.
	
	We have: \begin{center}
		$id_\mathcal{Y} ~ \theta=\tikzfig{transidp0}=\tikzfig{transidp1}=\theta$
	\end{center} and \begin{center}
	$\gamma ~ id_\mathcal{Y}=\tikzfig{transidp2}=\tikzfig{transidp3}=\gamma$
\end{center}
	
	So $\textbf{GL}$ is a category and setting $Bold(\mathcal{L})=\textbf{L}$, we have a functor $Bold: \textbf{GL}\to \text{C-}\textbf{Prop}$.
	
	Since $C$-\textbf{Prop} is equivalent to the Eilenberg-Moore category of $T$ then each prop $\textbf{P}$ is a coequalizer of free  props \cite{barr2005toposes}. An explicit coequalizer is:
	
	\begin{center}
		\tikzfig{coeql}
	\end{center}
	
	In fact graphically: $\tikzfig{pcoeq0}=\tikzfig{pcoeq1}$.
	
	Let's take $\Sigma_\mathcal{P}\df U \textbf{P}$, $E_\mathcal{P}\df UFU\textbf{P}$, $r_\mathcal{P}\df id_{FUFU\textbf{P}}$ and $l_\mathcal{P}\df F\eta_{U\epsilon_{\textbf{P}}}$.
	We have: $\epsilon_{FU\textbf{P}}\circ l_\mathcal{P}=\epsilon_{FU\textbf{P}}\circ F\eta_{U\epsilon_{\textbf{P}}}=\tikzfig{fcoeq0}=\tikzfig{fcoeq1}=FU\epsilon_{\textbf{P}}$. Thus we can define $\mathcal{P}\df \left(\Sigma_\mathcal{P},E_\mathcal{P},l_\mathcal{P},r_\mathcal{P}\right)$ and we have $Bold(\mathcal{P})\simeq \textbf{P}$. So $Bold$ is essentially surjective.
	
	Let $s$ be a section of $\pi_\mathcal{Y}$. Given a prop morphism $f:Bold(\mathcal{L})\to Bold(\mathcal{Y})$ we define a translation $\theta_f:\mathcal{L}\to \mathcal{Y}$ by $\theta_f\df Usf\pi_\mathcal{L}\circ \eta_{\Sigma_\mathcal{L}}$. We check the soundness condition: $ \pi_\mathcal{Y}\circ \left(\epsilon_{F\Sigma_\mathcal{Y}}\circ F\theta_f\right)=\tikzfig{fcoeq2}=\tikzfig{fcoeq3}=\tikzfig{fcoeq4}=f\circ \pi_\mathcal{L}$. This gives us $Bold(\theta_f)=f$. So $Bold$ is full.
\end{proof}

\begin{proof}[Proof of Lemma \ref{copr}]\phantomsection\label{copr:pr}
	Since coproducts commute with coequalizers we have the following coequalizer:
	
	\begin{center}
		\tikzfig{coeqsum}
	\end{center}
	
	Furthermore: $F\iota_{\Sigma_\mathcal{L}}\circ \epsilon_{F\Sigma_\mathcal{L}}\circ Fl_\mathcal{L}= \epsilon_{F\Sigma_{\mathcal{L}+\mathcal{Y}}}\circ FUF\iota_{\Sigma_\mathcal{L}}\circ Fl_\mathcal{L}$.
	
	And so: $\epsilon_{F\Sigma_{\mathcal{L}+\mathcal{Y}}}\circ Fl_{\mathcal{L}+\mathcal{Y}}=\epsilon_{F\Sigma_{\mathcal{L}+\mathcal{Y}}}\circ F[UF(\iota_{\Sigma_\mathcal{L}})\circ l_\mathcal{L},UF(\iota_{\Sigma_\mathcal{L}})\circ l_\mathcal{Y}]=[\epsilon_{F\Sigma_{\mathcal{L}+\mathcal{Y}}}\circ FUF(\iota_{\Sigma_\mathcal{L}})\circ l_\mathcal{L},\epsilon_{F\Sigma_{\mathcal{L}+\mathcal{Y}}}\circ FUF(\iota_{\Sigma_\mathcal{L}})\circ l_\mathcal{Y}]=[F\iota_{\Sigma_\mathcal{L}}\circ \epsilon_{F\Sigma_\mathcal{L}}\circ Fl_\mathcal{L},F\iota_{\Sigma_\mathcal{Y}}\circ \epsilon_{F\Sigma_\mathcal{Y}}\circ Fl_\mathcal{Y}]=(\epsilon_{F\Sigma_\mathcal{L}}\circ Fl_\mathcal{L})+(\epsilon_{F\Sigma_\mathcal{Y}}\circ Fl_\mathcal{Y})$.
	
\end{proof}

\begin{proof}[Proof of Lemma \ref{coeq}]\phantomsection\label{coeq:pr}
	We use the following diagram:
	
	\begin{center}
		\tikzfig{qcoeq}
	\end{center}
	
	First we have $\pi_{\quot{\mathcal{L}}{E}}\circ \epsilon_{F\Sigma_{\mathcal{L}}}\circ Fl_{\mathcal{L}}=\pi_{\quot{\mathcal{L}}{E}}\circ \epsilon_{F\Sigma_{\mathcal{L}}}\circ Fr_{\mathcal{L}}$ and thus there is a unique $\pi:\textbf{L}\to Bold(\quot{\mathcal{L}}{E})$ such that $\pi\circ \pi_\mathcal{L}=\pi_{\quot{\mathcal{L}}{E}}$. Then: $\pi\circ \pi_\mathcal{L}\circ\epsilon_{F\Sigma_{\mathcal{L}}}\circ Fl=\pi\circ \pi_\mathcal{L}\circ \epsilon_{F\Sigma_{\mathcal{L}}}\circ Fr$. 
	
	It remains to show the universal property. Let $f:\textbf{L}\to \textbf{K}$ be a prop morphism such that: $f\circ \pi_\mathcal{L}\circ \epsilon_{F\Sigma_{\mathcal{L}}}\circ Fl=f\circ \pi_\mathcal{L}\circ \epsilon_{F\Sigma_{\mathcal{L}}}\circ Fr$.
	
	We also have $f\circ \pi_\mathcal{L}\circ  \epsilon_{F\Sigma_{\mathcal{L}}}\circ Fl_{\mathcal{L}}=f\circ \pi_\mathcal{L}\circ  \epsilon_{F\Sigma_{\mathcal{L}}}\circ Fr_{\mathcal{L}}$ and then the universal property of the coproduct gives us $f\circ \pi_\mathcal{L}\circ \epsilon_{ F \Sigma_{\mathcal{L} }}\circ F[l_\mathcal{L} ,l]=f\circ \pi_\mathcal{L}\circ \epsilon_{F\Sigma_{\mathcal{L}}} \circ F[r_\mathcal{L} ,r]$.
	
	So there is a unique $g:Bold(\quot{\mathcal{L}}{E})\to \textbf{K}$ such that: $f\circ \pi_\mathcal{L}=g\circ \pi_{\quot{\mathcal{L}}{E}}=g\circ \pi \circ \pi_\mathcal{L}$. And since $\pi_\mathcal{L}$ is an epimorphism: $f=g\circ \pi$.
\end{proof}

\subsection{Proofs for section 2}

\begin{proof}[Proof of Lemma \ref{thm:rw}]\phantomsection\label{pr:rw}
	
	We will show that for any diagram $\omega:a\to b$ we have $\mathbb{W}\vdash \omega=\Gamma_b \circ \sigma \circ \Delta_a$ where $\sigma$ is a permutation. To do so we define, for each wire in the diagram, its \textbf{situation} as a couple of elements of $\{i,o,d,g\}$. The situation of a wire describes what the wire is linked to what: $i$ stands for input, $o$ for output, $d$ for non identity divider, and $g$ for non identity gatherer. For example, a wire which links an input to an output has situation $(i,o)$ and a wire linking a gatherer to a divider has situation $(g,d)$. The possible situations for a simple wire are: $(i,o)$, $(i,g)$, $(d,o)$, and $(d,g)$. The possible situations for a big wire are the same plus $(i,d)$, $(d,d)$, $(g,o)$, $(g,g)$ and $(g,d)$.
	
	We say that a diagram is \textbf{expanded} if it contains no big wire in one of the \textbf{bad} situations which are $(g,d)$ and $(d,g)$.
	
	The expanded conditions enforce a unique structure. In fact the only expanded diagrams are exactly the $\Gamma_b\circ\sigma\circ\Delta_a$. Thus it only remains to show that any diagram can be rewritten into an expanded one.
	
	We proceed by induction on the size of the biggest big wire in a bad situation.
	
	If there are no such wire then we are already in expanded form. Else, we consider the biggest wires in a bad situations. If a wire is in situation $(g,d)$ then the elimination rule \ref{elim} can be applied and decreases strictly the size of the big wires in bad situations. If a wire is in situation $(g,d)$ then the expantion rule \ref{exp} can be applied and decreases strictly the size of the big wires in bad situations. Thus all the wire in bad situations can be removed and replace by wire with strictly smaller size. Then by induction all diagrams can be rewritten into expanded form. 
	
\end{proof}

\begin{proof}[Proof of Lemma \ref{funk}]\phantomsection\label{pr:funk}
	By induction. $S_1$ is clearly a functor. The expansion equation ensures that $S_k(id_n)=id_{1\cdot n}$: $\tikzfig{sid0}=\tikzfig{sid1}$. The elimination equation ensures that $S_k(g\circ h)=S_k(g)\circ S_k(h)$:
	$\tikzfig{scp0}=\tikzfig{scp1}=\tikzfig{scp2}$.
\end{proof}

\begin{proof}[Proof of Lemma \ref{wstrip}]\phantomsection\label{pr:wstrip}
	Given a generator $x\in\Sigma_\mathcal{P}$ we have ${S_1(\pi_\mathcal{P}(x))}={\pi_\mathcal{SP}(x)}=\pi_\mathcal{P}$.
\end{proof}

\begin{proof}[Proof of Lemma \ref{struct}]\phantomsection\label{pr:struct}
	
	We use the same method as in the prof of the rewiring theorem. In $\mathcal{S}\textbf{L}$, there are new situations: $s$ represent a simple generator and $S$. The new possible situations for a simple wires are $(i,s)$, $(d,s)$, $(s,s)$, $(s,g)$ and $(s,o)$. For a big wire the new situations are $(i,S)$, $(d,S)$, $(g,S)$, $(S,S)$, $(S,g)$, $(S,d)$ and $(S,o)$.
	
	A diagram of $\mathcal{S}\textbf{L}$ is in expanded form if it contains no big wire in one of the bad situations which are $(g,d)$, $(d,g)$, $(i,S)$, $(d,S)$, $(g,S)$, $(S,S)$, $(S,g)$, $(S,d)$ and $(S,o)$.
	
	So an expanded diagram contains no big generators and is of the form $\omega=\Gamma_b\circ \nu \circ\Delta_a$. Where $\nu$ contains no big wire, so $S_1(\{\nu\})=\nu$. Moreover $S_1(\{\omega\})=S_1(\{ \Gamma_b\circ \nu \circ\Delta_a \})=S_1(\{\Gamma_b \})\circ S_1(\{\nu\}) \circ S_1(\{\Delta_a \})=S_1(\{\nu\})=\nu$.
	So for an expanded diagram $\omega=\Gamma_b\circ S_1(\{\omega\}) \circ\Delta_a$. 
	
	It remains to show that any diagram can be rewritten in expanded form. We proceed by induction on the size of the biggest big wire in a bad situation.
	
	If there is no such wire then we are already in expanded form. Else, we consider the biggest wires in a bad situations.
	
	First we apply the expansion rule to remove all the biggest wires in situation $(i,S)$, $(d,S)$, $(S,S)$, $(S,g)$ and $(S,o)$. If a wire is in situation $(g,d)$ then the elimination rule \ref{elim} can be applied and decreases strictly the size of the big wires in bad situations. If a wire is in situation $(g,d)$ then the expantion rule \ref{exp} can be applied and decreases strictly the size of the big wires in bad situations. If a wire is in situation $(g,S)$ or $(S,d)$ we apply the corresponding unfold equation. This decreases strictly the size of the big wires in bad situations. Thus all the wire in bad situations can be removed and replace by wire with strictly smaller size. Then by induction all diagrams can be rewritten into expanded form. 
	
\end{proof}

\begin{proof}[Proof of Lemma \ref{univ}]\phantomsection\label{pr:univ}
	The diagram clearly commutes. Now let $f: \textbf{K}\to \textbf{1}_{\N} $ and $g: \textbf{K}\to \textbf{L}$ be two functors such that $[\_]\circ f=! \circ g$. We define a functor $h:\textbf{K}\to \mathcal{S}\textbf{L}$. Given a morphism $t\in \textbf{K}[a,b]$ we take $h(t)=\Gamma_{f(a)}\circ S_1 (g(t)) \circ \Delta_{f(b)}$. This is well defined since $[f(a)]=g(a)$. We have $\{h(t)\}=\{\Gamma_{f(a)}\circ S_1 (g(t)) \circ \Delta_{f(b)}\}=\{S_1 (g(t))\}=g(t)$ and $!(h(t))=!(\Gamma_{f(a)}\circ S_1 (g(t)) \circ \Delta_{f(b)})=!_{f(a),f(b)}=f(t)$. Now let $l:\textbf{K}\to \mathcal{S}\textbf{L}$ be another functor such that $\{\_\}\circ l=g$ and $! \circ l=f$. we have $l(t)\in \mathcal{S}\textbf{L}[l(a),l(b)]$ the structure theorem gives us: $l(t)=\Gamma_{l(a)}\circ S_1 (\{l(t)\}) \circ \Delta_{l(b)}=\Gamma_{f(a)}\circ S_1 (g(t)) \circ \Delta_{f(b)}$. So $l=h$. The diagram is a pullback square.
\end{proof}

\begin{proof}[Proof of Lemma \ref{part}]\phantomsection\label{pr:part}
	By construction $\textbf{C}_p$ is the pullback of $!:\textbf{C}\to \mathbf{N}$ and $[\_]:\mathbf{1}_{\N}\to \mathbf{N}$. Thus $\interp{\_}$ lift to a unique functor $\interp{\_}_p:\mathcal{L}\textbf{L}\to \textbf{C}_p$ satisfying $\{\_\} \circ \interp{\_}_p=\interp{\_}\circ \{\_\}$. This functor is an $\N$-colored prop morphism. Furthermore since $!$ and $\{\_\}$ are jointly monic then $\interp{\_}_p$ is faithful iff $\interp{\_}$ is faithful.
\end{proof}

\begin{proof}[Proof of Lemma \ref{equiv}]\phantomsection\label{pr:equiv}
	We consider the wire striper functor $\{\_\}:\mathcal{S}\textbf{L}\to \textbf{L}$. It is clearly essentially surjective. It is also full and faithful since it induces a bijection between $\mathcal{S}\textbf{L}[\mathbf{a},\mathbf{b}]$ and $\textbf{L}[[\mathbf{a}],[\mathbf{b}]]$ by the structure theorem. So it is an equivalence of category and $\textbf{L}\simeq \mathcal{S}\textbf{L}$.
\end{proof}

\subsection{Proofs for section 3}

\begin{proof}[Proof of Lemma \ref{stbox}]\phantomsection\label{pr:stbox}
For a diagram $\omega: \mathbf{a}\to \mathbf{b}$ we define $\omega'\df \Delta_\mathbf{a} \circ \omega \circ \Gamma_\mathbf{b}$.	
We have $\Gamma_\mathbf{a} \circ B(O(\omega)) \circ \Delta_\mathbf{b}= \Gamma_\mathbf{a} \circ B(O(\Gamma_\mathbf{a} \circ \omega' \circ \Delta_\mathbf{b}))\circ \Delta_\mathbf{b}= \Gamma_\mathbf{a} \circ B(O(\omega'))\circ \Delta_\mathbf{b}$. Thus we only need to show that $B(O(\omega'))=\omega'$ for all $\omega': n\cdot 1\to m\cdot 1$.

A diagram of $Box\textbf{P}$ of type $n\cdot 1\to m\cdot 1$ is in \textbf{boxed form} if it is of the form $\Delta_m \circ \Box_f \circ \Gamma_n$.
	
A diagram in boxed form satisfies $B(O(\omega'))=\omega'$ thus we just have to show that any $\omega':n\cdot 1\to m\cdot 1$ can be rewrite into boxed form.
	
First we use the $swap$ rule to transform all swaps into boxes. Then the diagram can be put in a sequence of the form $w_0 \circ b_0 \circ \dots \circ w_{h} \circ b_h \circ w_{h+1}$, where the $w_i$s are in $\textbf{W}$ and the $b_i$s are of the form $id_\mathbf{c} \otimes \Box_f \otimes id_\mathbf{d}$. We use the tensor rule on each $b_i$ until we obtain a new sequence $w'_0 \circ \Box_{f_0} \circ \dots \circ w'_{h} \circ \Box_{f_h} \circ w'_{h+1}$. The rewiring theorem gives us $\omega'=\Delta_m \circ \Box_{f_0} \circ \dots \circ \Box_{f_h} \circ \Gamma_n$.  Finally setting $f=f_0 \circ \dots \circ f_h$ the composition rule gives $\omega'=\Delta_m \circ \Box_f \circ \Gamma_n$ a diagram in boxed form.
\end{proof}

\begin{proof}[Proof of Lemma \ref{same}]\phantomsection\label{pr:same}
	$Unbox$ is an $\N$-colored prop morphism so it is essentially surjective. 
	
	Let $\alpha,\beta \in Box\textbf{L}[\mathbf{a},\mathbf{b}]$ be two diagrams such that $Unbox(\alpha)=Unbox(\beta)$. We have $S_1(O(\alpha))=S_1(O(\beta))$, then applying the wire stripper functor $O(\alpha)=O(\beta)$. The structure theorem of $Box\textbf{L}$ finally gives us $\alpha=\beta$. $Unbox$ is faithful.
	
	Given $\omega\in \mathcal{S}\textbf{L}[\mathbf{a},\mathbf{b}]$, the structure theorem gives us $\omega=\Gamma_\mathbf{b}\circ S_1(\{\omega\}) \circ\Delta_\mathbf{a} $
	
	Then $Unbox(\Gamma_\mathbf{b}\circ B(\{\omega\}) \circ\Delta_\mathbf{a})=\Gamma_\mathbf{b}\circ S_1(O(B(\{\omega\}))) \circ\Delta_\mathbf{a}= \Gamma_\mathbf{b}\circ S_1(\{\omega\}) \circ\Delta_\mathbf{a}=\omega$. $Unbox$ is full.
\end{proof}


\end{document}